\documentclass[11pt]{article}

\usepackage{amsmath} 
\usepackage{amsthm} 
\usepackage{thmtools}
\usepackage{amssymb}	
\usepackage{graphicx} 
\usepackage{multicol} 
\usepackage{multirow}
\usepackage{color}
\usepackage{bm}
\usepackage{bbm}
\usepackage[dvips,letterpaper,margin=1in,bottom=1in]{geometry}

\usepackage[utf8]{inputenc}
\usepackage[english]{babel}

\usepackage{mathtools}

\newtheorem{theorem}{Theorem}[section]

\newtheorem{lemma}[theorem]{Lemma}
\newtheorem{corollary}[theorem]{Corollary}
\newtheorem{proposition}[theorem]{Proposition}

\newtheorem{definition}{Definition}[section]

\newtheorem{remark}{Remark}[section]

\newcommand{\braket}[2]{\left< #1 \vphantom{#2} \middle| #2 \vphantom{#1} \right>} 
\newcommand{\ketbra}[2]{\ensuremath{\ket{#1}\!\bra{#2}}}
\newcommand{\Dens}{\mathsf{D}}
\newcommand{\Id}{\mathbbm{1}}

\DeclarePairedDelimiter\abs{\lvert}{\rvert}

\DeclarePairedDelimiter\ket{\lvert}{\rangle}
\DeclarePairedDelimiter\bra{\langle}{\rvert}

\newcommand{\diag} {\operatorname{diag}}

\newcommand{\calE}{\mathcal{E}}

\newcommand{\calH}{\mathcal{H}}

\newcommand{\calK}{\mathcal{K}}

\usepackage{algorithm}
\usepackage{algpseudocode}

\usepackage{tabularx}
\usepackage{booktabs}
\usepackage{threeparttable}
\usepackage{adjustbox} 

\newcommand{\footremember}[2]{%
    \footnote{#2}
    \newcounter{#1}
    \setcounter{#1}{\value{footnote}}%
}

\usepackage{tikz}
\usetikzlibrary{quantikz2}
\usepackage[
    pagebackref,
    colorlinks,
    linkcolor=magenta,
    citecolor=magenta
]{hyperref}
\usepackage[capitalize,noabbrev]{cleveref}

\title{No-Go Theorems for Quantum Transport Metrics\\ from Fixed Cost Operators}
\author{Minbo Gao\footremember{1}{Institute of Software, CAS. \href{mailto:gaomb@ios.ac.cn}{\nolinkurl{gaomb@ios.ac.cn}} or
\href{mailto:gmb17@tsinghua.org.cn}{\nolinkurl{gmb17@tsinghua.org.cn}}.}
\and Zhengfeng Ji\footremember{2}{Tsinghua University.
\href{mailto:jizhengfeng@tsinghua.edu.cn}{\nolinkurl{jizhengfeng@tsinghua.edu.cn}}.}
\and Tianshi Yu\footremember{3}{Institute of Software, CAS. \href{mailto:yuts@ios.ac.cn}{\nolinkurl{yuts@ios.ac.cn}}}
}

\begin{document}

\maketitle

\begin{abstract}
Coupling-based quantum optimal transport generalizes classical optimal
transport by representing transport plans as bipartite states with prescribed
marginals and evaluating their cost as the expectation of a fixed Hermitian
operator.  Friedland et al. conjectured that the square root of the optimal
cost associated with the SWAP projector is a metric in every dimension and
that this property persists for nearby quantum cost matrices~\cite{FECZ22}.
Miller disproved both conjectures by constructing explicit diagonal qutrit
counterexamples~\cite{Mil26}.  Building on his analysis, we prove a uniform
no-go theorem for standard couplings.  In every dimension $d\geq3$, no fixed
cost operator makes either the optimal cost or its square root a metric, with
violations occurring already among commuting states.  The obstruction
persists under stabilization of the SWAP cost.  For channel-induced
couplings, global nonnegativity and vanishing self-cost force the cost
operator to be zero, precluding point separation when $d\geq2$.  Taken
together, these no-go results show that fixed-cost coupling formulations do
not lead to metrics on the full quantum state space.
\end{abstract}

\section{Introduction}

Classical optimal transport asks how to move one probability distribution
to another at minimum cost.  For probability measures $\mu,\nu$ on a Polish
metric space $(\mathsf{X},d)$ and a measurable cost
$c:\mathsf{X}\times\mathsf{X}\to[0,\infty]$, the Monge problem minimizes
$\int c(x,T(x))\,\mathrm{d}\mu(x)$ over measurable maps $T$ such that
$T_\#\mu=\nu$, equivalently,
$\int h(T(x))\,\mathrm{d}\mu(x)=\int h(y)\,\mathrm{d}\nu(y)$ for every
bounded Borel $h$~\cite[\S~1.1, p.~2]{Vil09}.  Such a map need not exist:
for example, on $\mathbb{R}$, if $\mu=\delta_0$ and
$\nu=\tfrac12\delta_{-1}+\tfrac12\delta_1$, then
$T_\#\mu=\delta_{T(0)}$ for every map $T$, so the Monge feasible set is
empty~\cite[\S~1.1, p.~2]{Vil09}.

Kantorovich's 1942 formulation instead optimizes over transport plans,
namely probability measures on $\mathsf{X}\times\mathsf{X}$ with prescribed
marginals~\cite{Kan58}:
\begin{equation}\label{eq:intro-classical-couplings}
    \Pi(\mu,\nu)
    :=
    \left\{
        \pi\in\mathcal{P}(\mathsf{X}\times\mathsf{X}):
        \pi\text{ has marginals }\mu\text{ and }\nu
    \right\}.
\end{equation}
For a ground cost $c$, this gives
$\mathcal{T}_c(\mu,\nu):=\inf_{\pi\in\Pi(\mu,\nu)}
\int c\,\mathrm{d}\pi$.  For $p\geq1$, let
$\mathcal{P}_p(\mathsf{X})$ denote the Borel probability measures on
$\mathsf{X}$ with finite moment of order $p$.  For
$\mu,\nu\in\mathcal{P}_p(\mathsf{X})$, choosing $c(x,y)=d(x,y)^p$ yields a distance called
Wasserstein distance of order $p$:
\begin{equation}\label{eq:intro-wasserstein}
    W_p(\mu,\nu)
    :=
    \left(\inf_{\pi\in\Pi(\mu,\nu)}
    \int d(x,y)^p\,\mathrm{d}\pi(x,y)\right)^{1/p}.
\end{equation}
This is a distance on $\mathcal{P}_p(\mathsf{X})$; see
\cite[pp.~94--95]{Vil09}.  Wasserstein geometry has become a standard
tool in the study of gradient flows and nonlinear PDEs~\cite{Amb08},
statistics~\cite{PZ19}, and computational optimal transport and data
science~\cite{PC19}.

A coupling-based quantum analogue~\cite{FECZ22,CEFZ23} replaces measures
by density operators, couplings by bipartite states with prescribed partial
traces, and expected costs by expectations of fixed operators.  For states
$\rho,\sigma$ on a finite-dimensional Hilbert space $\calH$, set
\begin{align}
    \Gamma(\rho,\sigma)
    &:=
    \{\omega\succeq0:
      \operatorname{Tr}_B\omega=\rho,\
      \operatorname{Tr}_A\omega=\sigma\},
    \label{eq:intro-standard-couplings}\\
    T_C(\rho,\sigma)
    &:=
    \min_{\omega\in\Gamma(\rho,\sigma)}
    \operatorname{Tr}(C\omega),
    \label{eq:intro-standard-cost}
\end{align}
for a fixed Hermitian cost $C$.  We call $\Gamma(\rho,\sigma)$ the
\emph{standard coupling set} to distinguish this direct fixed-marginal
construction from the channel-induced coupling set considered below.  Such
costs have applications in quantum
program logic~\cite{BGWZ25,BGK+26}, quantum resource
theory~\cite{Shi_2024}, quantum machine learning~\cite{CYL+19},
and semiclassical and mean-field
limits~\cite{GMP16}.
For $C\succeq0$, one may set $W_C:=\sqrt{T_C}$; the main example is the
SWAP cost $P_-=(\mathbbm{1}-F)/2$~\cite{FECZ22}.
Although the admissible costs are characterized in~\cite{CEFZ23}, the SWAP
choice already fails the triangle inequality for commuting qutrit
states, shown recently in~\cite{Mil26}.

\begin{samepage}
This motivates our main question:
\begin{quote}
\emph{Can any fixed, state-independent
linear cost define a metric on all mixed states through the above formulation of
optimization over quantum couplings?}
\end{quote}
\end{samepage}

Our answer is negative: for standard couplings, every admissible fixed cost
violates the triangle inequality for $d\geq3$, already on commuting qutrit
states.

We also study the channel-induced coupling set
\begin{equation}\label{eq:intro-channel-couplings}
    \widetilde{\Gamma}(\rho,\sigma)
    :=
    \{\omega\succeq0:
      \operatorname{Tr}_B\omega=\rho^\top,\
      \operatorname{Tr}_A\omega=\sigma\},
\end{equation}
where the transpose is a Choi-basis convention.
For channel-induced couplings, nonnegativity and zero self-cost
force $C=0$, so point separation fails for $d\geq2$.

Thus, for $d\geq3$, a quantum Wasserstein distance on all mixed states
cannot arise from a direct quantum analogue of the classical fixed-cost
coupling linear program; finding the right replacement remains open.

\subsection{Main results}

For a set $\mathsf{S}$, we call a map
$D:\mathsf{S}\times\mathsf{S}\to[0,\infty)$ a \emph{semidistance} if
it is symmetric and vanishes exactly on the diagonal; the triangle
inequality is not required.

Our first result answers the open question for the standard coupling
set.

\begin{theorem}[Standard-coupling no-go]\label{thm:intro-standard-no-go}
    Let $d\geq 3$ and let $C=C^\dagger$ act on
    $\mathbb{C}^d\otimes\mathbb{C}^d$.  Suppose that $T_C$ in
    \eqref{eq:intro-standard-cost} is a semidistance on the density operators
    on $\mathbb{C}^d$.  Then there are three mutually commuting states
    $\rho,\sigma,\tau$ such that
    \begin{align}
        T_C(\rho,\tau)
        &>
        T_C(\rho,\sigma)+T_C(\sigma,\tau),\label{eq:intro-T-violation}\\
        W_C(\rho,\tau)
        &>
        W_C(\rho,\sigma)+W_C(\sigma,\tau).
        \label{eq:intro-W-violation}
    \end{align}
    Consequently, no fixed state-independent linear cost makes $W_C$ a
    metric on the full mixed-state space when $d\geq 3$.
\end{theorem}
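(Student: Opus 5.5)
The plan is to first pin down the structure forced on $C$ by the semidistance hypothesis, and then to build the violating triple inside a three-dimensional coordinate subspace using only diagonal states.

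\textbf{Structure of $C$.} If $\rho=|\psi\rangle\langle\psi|$ is pure, then $\Gamma(\rho,\rho)=\{|\psi\psi\rangle\langle\psi\psi|\}$. Hence $T_C(\rho,\rho)=0$ gives $\langle\psi\psi|C|\psi\psi\rangle=0$ for every $\psi$. By polarization this yields $\Pi_+C\Pi_+=0$, where $\Pi_+=(\Id+F)/2$ is the projector onto the symmetric subspace.

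Next I will use $T_C\geq0$ together with $T_C(\rho,\rho)=0$ for full-rank $\rho$. For diagonal $\rho$, the coupling $|v\rangle\langle v|$ with $v=\sum_i\sqrt{\rho_i}\,|ii\rangle$ is supported on the symmetric subspace and attains cost $0$, so it is a minimizer. First-order optimality at this interior-type minimizer should kill the cross block $\Pi_+C\Pi_-$. Nonnegativity should then make $K:=\Pi_-C\Pi_-$ positive semidefinite on the relevant part of $\wedge^2\mathbb{C}^d$; I expect this to be where the characterization of admissible costs in~\cite{CEFZ23} can be quoted.

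\textbf{Reduction to three levels.} Choose a three-dimensional coordinate subspace. Diagonal states supported there commute, and their coupling problem only sees the compression of $K$ to $\wedge^2\mathbb{C}^3\cong\mathbb{C}^3$. This compression is a $3\times3$ positive semidefinite matrix, and it is nonzero because point separation holds. All three states in the counterexample will be diagonal in this basis.

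\textbf{Construction of the violation.} I will use a pure endpoint to linearize one side and a small perturbation to make the other two terms cheap. Let $\rho=|0\rangle\langle0|$. The coupling set with a pure marginal is the singleton $\rho\otimes\sigma$, so $T_C(\rho,\sigma)=\ell(\sigma)$ with $\ell(\sigma)=\operatorname{Tr}(C\,|0\rangle\langle0|\otimes\sigma)$, a linear functional; $\ell$ is not constant on diagonal states, otherwise point separation fails.

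Take $\tau$ full rank on the three levels, and set $\sigma=\tau-\varepsilon\Delta$ with $\Delta$ traceless diagonal and $\ell(\Delta)>0$. The target inequality then becomes $\varepsilon\,\ell(\Delta)>T_C(\sigma,\tau)$, and it suffices to show $T_C(\sigma,\tau)=o(\varepsilon)$.

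For the upper bound I will perturb the zero-cost symmetric coupling $|v\rangle\langle v|$ of $\tau$. Keep the vector symmetric to first order, using amplitudes $\sqrt{\sigma_i}$ and $\sqrt{\tau_i}$ mixed suitably, and correct the marginals with a small positive operator. Since $C$ vanishes on the symmetric block and has no cross block, the cost of such a coupling should be $O(\varepsilon^2)$. This settles \eqref{eq:intro-T-violation}.

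For \eqref{eq:intro-W-violation}, $W_C(\sigma,\tau)=O(\varepsilon)$ is of the same order as the linear gain, so the above is not enough. Here I will choose $\tau$ on the boundary, for instance vanishing weight on one level, and choose $\Delta$ along the degenerate direction, aiming to push $T_C(\sigma,\tau)$ to $o(\varepsilon^2)$. Alternatively I will replace the pure endpoint by a second perturbed state, so that both short legs are quadratically small while $T_C(\rho,\tau)$ stays of order $\varepsilon$.

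\textbf{Main obstacle.} The hard step is this second-order analysis: producing, for an arbitrary admissible $K$, couplings whose marginals are exactly right and whose cost is genuinely of lower order. It requires controlling the correction term needed to fix the marginals after the symmetric perturbation. The simple alternative of a product correction $a\otimes b/\operatorname{Tr}a$ only gives linear cost $\varepsilon\,c_{12}$, and that fails for symmetric-like costs such as $P_-$. My first attempt will be to make the correction itself symmetric, using $\sqrt{\sigma_i\tau_i}$-type amplitudes, and to bound its antisymmetric part by $O(\varepsilon^2)$.
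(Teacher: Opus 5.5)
\textbf{Summary.} The half of the statement you can actually complete is \eqref{eq:intro-T-violation}, but that half is weak. The real content, \eqref{eq:intro-W-violation}, has no working mechanism in your plan: your main pure-endpoint construction provably fails for $C=P_-$, and both fallbacks fail for every admissible $C$.

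\textbf{The $T$-half, and why the fallbacks fail.} Inequality \eqref{eq:intro-T-violation} needs none of the delicate second-order coupling construction you describe. By \cite{CEFZ23}, $\lambda_{\min}P_-\preceq C\preceq\lambda_{\max}P_-$ with $\lambda_{\min}>0$. For diagonal states on $n$ levels, Lemma~\ref{lem:diagonal-state-reduction} gives
\[
\frac{1}{2n}H^2(p,q)\leq T_{P_-}(\diag p,\diag q)\leq\frac12 H^2(p,q),\qquad H^2(p,q):=\sum_k\bigl(\sqrt{p_k}-\sqrt{q_k}\bigr)^2 .
\]
The lower bound is the reverse triangle inequality at a single node, as in the $(\sigma,\tau)$ step of Proposition~\ref{prop:weighted-qutrit-counterexample}. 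The upper bound comes from the product coupling $\gamma=pq^\top$, for which $\Phi(\gamma)=(\sum_k\sqrt{p_kq_k})^2$. Hence $T_C(\sigma,\tau)=O(\varepsilon^2)$ for full-rank $\tau$, and your linear gain wins. However, that inequality already holds for qubits, because $T_C$ is a squared distance. The same two-sided bound rules out both fallbacks. On states diagonal in one basis, $W_C$ is bi-Lipschitz to the Hellinger distance. So $T_C(\sigma,\tau)$ is never $o(\varepsilon^2)$; there is no degenerate direction, since $\lambda_{\min}>0$. Likewise, two legs of size $O(\varepsilon)$ in $W_C$ force the long side to be $O(\varepsilon)$ as well.

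\textbf{The pure-endpoint construction fails for $P_-$.} Take the admissible cost $C=P_-$, $\rho=\ketbra{k}{k}$, $\sigma=\diag q$, $\tau=\diag r$, so that $W_{P_-}(\rho,\sigma)=\sqrt{(1-q_k)/2}$. For a coupling $\gamma$ of $q$ and $r$, let $u=(\sqrt{\gamma_{ij}})_{i\neq k,\,j}$, and let $v$ be $u$ with each entry $\sqrt{\gamma_{ik}}$ replaced by $\sqrt{\gamma_{ki}}$. Then $\|u\|^2=1-q_k$, $\|v\|^2=1-r_k$, and $\|u-v\|^2=\sum_{i\neq k}(\sqrt{\gamma_{ik}}-\sqrt{\gamma_{ki}})^2$. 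The reverse triangle inequality therefore gives
\[
T_{P_-}(\sigma,\tau)\geq\tfrac12\bigl(\sqrt{1-q_k}-\sqrt{1-r_k}\bigr)^2=\bigl(W_{P_-}(\rho,\sigma)-W_{P_-}(\rho,\tau)\bigr)^2 .
\]
So for $\sigma,\tau$ diagonal in a basis containing $\ket{k}$, the square-root triangle inequality with a pure endpoint holds for $P_-$ at every scale, not just infinitesimally.

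\textbf{What is missing.} The violation must come from a finite configuration with a mixed endpoint and exact constants. That needs two ingredients you do not have.

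First, the qutrit normal form of Lemma~\ref{lem:qutrit-normal-form}. The antiunitary identification of $\mathbb{C}^3$ with $\wedge^2\mathbb{C}^3$ shows that a single $U\otimes U$ diagonalizes $C$ in a wedge basis. With a general compression $K$ on a coordinate subspace, the diagonal twirl no longer preserves $\operatorname{Tr}(K\omega)$: it replaces $K$ by its wedge-diagonal part. Lemma~\ref{lem:diagonal-state-reduction} then gives only an upper bound on $T_K$, and you lose control of the long side once its endpoint is mixed.

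Second, an explicit triple evaluated exactly. The paper takes the rank-two endpoint $\rho=\diag(3/4,1/4,0)$, and $\sigma,\tau$ that differ by exchanging the weights of levels $1$ and $3$. The smallest weight $b$ is placed on the pair $\{1,3\}$. The short leg $T(\sigma,\tau)$ then costs only $(3-2\sqrt2)b/16$, while the long side costs $(a+2b)/16$.
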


Theorem~\ref{thm:intro-standard-no-go} is uniform over all admissible
Hermitian costs: it is not specific to the SWAP projector or to a
perturbation of it.  Moreover, one fixed triple of commuting qutrit states
provides the obstruction after a basis change depending on $C$.  The
commutativity of the marginals does not make the optimization classical.
Even for diagonal states, an optimal quantum coupling can use coherence
between $\ket{ij}$ and $\ket{ji}$, and this coherence is precisely what
drives the violation.  The threshold is sharp within this framework: in
dimension two, admissibility forces $C$ to be a positive multiple of $P_-$,
and the corresponding $W_C$ is a metric~\cite{CEFZ23}.

We also consider the stabilization of the SWAP cost introduced
by M\"uller-Hermes~\cite{Mul22}.  Write $T:=T_{P_-}$ and
\begin{equation}\label{eq:intro-stabilized}
    T_s(\rho,\sigma)
    :=
    \inf_{\gamma}
    T(\rho\otimes\gamma,\sigma\otimes\gamma),
    \qquad
    W_s(\rho,\sigma):=\sqrt{T_s(\rho,\sigma)},
\end{equation}
where the infimum is over finite-dimensional ancillary states.
Stabilization gives channel monotonicity and invariance under tensoring both
arguments with a common state~\cite{Mul22}.  It does not, however, restore
the triangle inequality.

\begin{corollary}[Stabilized SWAP cost]\label{cor:intro-stabilized-no-go}
    There are three mutually commuting qutrit states $\rho,\sigma,\tau$ for
    which
    \[
        W_s(\rho,\tau)
        >
        W_s(\rho,\sigma)+W_s(\sigma,\tau).
    \]
    In particular, the stabilized SWAP semidistance is not a metric.
\end{corollary}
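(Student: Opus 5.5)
Since $T_s$ is defined by an infimum over ancillas, it can only lower distances. The plan is therefore to sandwich $W_s$ between two quantities we control. We upper bound $W_s(\rho,\sigma)$ and $W_s(\sigma,\tau)$ by the unstabilized values $W(\rho,\sigma)$, $W(\sigma,\tau)$, taking the trivial ancilla $\gamma=1$ on $\mathbb{C}^1$. We lower bound $W_s(\rho,\tau)$ by a quantity that is stable under tensoring with $\gamma$. We would then pick three commuting qutrit states, the same triple used for Theorem~\ref{thm:intro-standard-no-go} with $C=P_-$ or a variant of Miller's diagonal example~\cite{Mil26}, for which the lower bound on $W_s(\rho,\tau)$ exceeds $W(\rho,\sigma)+W(\sigma,\tau)$.

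The key lower bound comes from the standard fact~\cite{FECZ22,Mul22} that for the SWAP cost
\[
T(\rho,\sigma)\;\ge\;\tfrac12\bigl(1-\operatorname{Tr}(\rho\sigma)\bigr)\quad\text{or, more sharply,}\quad T(\rho,\sigma)\ge \tfrac12\bigl(1-F(\rho,\sigma)\bigr)\text{-type bounds}.
\]
To prove a bound of this kind directly, note that for any coupling $\omega$ we have $\operatorname{Tr}(P_-\omega)=\tfrac12(1-\operatorname{Tr}(F\omega))$. One then bounds $\operatorname{Tr}(F\omega)$ from above by a quantity depending only on the marginals and multiplicative under tensor products. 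The fidelity $\mathrm{F}(\rho,\sigma)=\operatorname{Tr}\lvert\sqrt\rho\sqrt\sigma\rvert$ is the natural candidate: $\lvert\operatorname{Tr}(F\omega)\rvert\le \mathrm{F}(\rho,\sigma)$ should follow by purifying $\omega$ and using Uhlmann's theorem, or via the operator inequality $F\preceq \rho^{-1/2}\!\otimes\ldots$ in a Cauchy--Schwarz form. Since $\mathrm{F}(\rho\otimes\gamma,\tau\otimes\gamma)=\mathrm{F}(\rho,\tau)$, this gives
\[
T_s(\rho,\tau)\;\ge\;\tfrac12\bigl(1-\mathrm{F}(\rho,\tau)\bigr)
\]
uniformly in $\gamma$. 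For commuting (diagonal) states, $\mathrm{F}$ is the classical Bhattacharyya coefficient $\sum_i\sqrt{p_iq_i}$.

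For the concrete states, $\rho$ and $\tau$ will be chosen with nearly disjoint supports, so that the fidelity lower bound is large. For example, $\rho=\ketbra{0}{0}$ and $\tau=\ketbra{1}{1}$ give $T_s(\rho,\tau)\ge\tfrac12$, hence $W_s\ge 1/\sqrt2$. The intermediate state $\sigma$ will be a diagonal qutrit state, such as mass split between $\ket0,\ket1$ and some weight on $\ket2$, for which the coherent couplings between $\ket{ij}$ and $\ket{ji}$ make $T(\rho,\sigma)$ and $T(\sigma,\tau)$ small. Note that for a pure marginal the coupling is forced to be a product, so $T(\ketbra00,\sigma)=\tfrac12(1-\sigma_{00})$. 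This forces us instead to take $\rho,\tau$ mixed, say supported on $\{\ket0,\ket2\}$ and $\{\ket1,\ket2\}$, as in Miller's example. We then compute $W(\rho,\sigma)$ and $W(\sigma,\tau)$ exactly, or bound them from above via explicit couplings that exploit the coherence, and compare with $\sqrt{(1-\mathrm{F}(\rho,\tau))/2}$.

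The main obstacle is this last comparison. The fidelity lower bound may be too weak compared with the true $T(\rho,\tau)$ that drives the violation in Theorem~\ref{thm:intro-standard-no-go}. If numerics show it fails, I would replace fidelity with a sharper stabilization-invariant lower bound, such as the dual SDP value: a dual feasible pair $(X,Y)$ with $X\otimes\Id+\Id\otimes Y\preceq P_-$ certifies $T\ge \operatorname{Tr}(X\rho)+\operatorname{Tr}(Y\tau)$. The difficulty is lifting this certificate to $X\otimes\Id_\gamma,\,Y\otimes\Id_\gamma$ against $P_-$ on the larger space, which is not automatic since the SWAP there also swaps the ancillas. I would try to handle that by a symmetrization/twirling argument over the ancilla, reducing to an ancilla-independent inequality.
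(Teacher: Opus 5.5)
\textbf{Genuine gap.} Your main route cannot succeed, whatever states you pick. The bound $T(\rho,\sigma)\geq\frac12\bigl(1-\mathrm{F}(\rho,\sigma)\bigr)$ is correct. Uhlmann's theorem applies because if $\ket\Psi$ purifies $\omega$, then $\ket\Psi$ and $(F\otimes\Id_R)\ket\Psi$ purify $\rho$ and $\sigma$. The bound also survives a common ancilla. The problem is that $\sqrt{(1-\mathrm{F})/2}$ is half the Bures distance, hence itself a metric, and $\sqrt{(1-\mathrm{F})/2}\leq W_s\leq W$. Therefore, for every triple,
\[
\sqrt{\tfrac{1-\mathrm{F}(\rho,\tau)}{2}}
\le\sqrt{\tfrac{1-\mathrm{F}(\rho,\sigma)}{2}}+\sqrt{\tfrac{1-\mathrm{F}(\sigma,\tau)}{2}}
\le W(\rho,\sigma)+W(\sigma,\tau),
\]
so the sandwich never closes, and no numerics are needed. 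The same objection kills any lower bound on $W_s$ that itself obeys the triangle inequality.

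On the paper's triple, the fidelity bound gives about $0.386$, against $W(\rho,\sigma)+W(\sigma,\tau)\approx0.427$ and $W(\rho,\tau)=\sqrt3/4\approx0.433$. So you need a lower bound on $W_s(\rho,\tau)$ essentially as strong as $W(\rho,\tau)$ itself. Separately, the first displayed ``standard fact'' $T\geq\frac12(1-\operatorname{Tr}\rho\sigma)$ is false, e.g.\ for $\rho=\sigma$ mixed. That quantity is the cost of the product coupling, hence an upper bound on $T$.

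Everything therefore rests on your fallback, and it leaves the decisive step open. The missing idea is to use commutativity to show $T(\rho\otimes\gamma,\tau\otimes\gamma)=T(\rho,\tau)$ exactly. The paper does this as follows:
\begin{itemize}
\item It diagonalizes $\gamma$ too, so the tensored pair is again diagonal.
\item It writes $T=\frac12\bigl(1-\max_\pi\Phi(\pi)\bigr)$ with $\Phi(\pi)=\sum_{i,j}\sqrt{\pi_{ij}\pi_{ji}}$.
\item By Cauchy--Schwarz, coarse-graining a coupling of $p\otimes r$ and $q\otimes r$ cannot decrease $\Phi$, while the lift $\pi_{ij}r_a\mathbf{1}_{\{a=b\}}$ gives the reverse inequality.
\end{itemize}
The unstabilized violation with $a=b=c=1$ then transfers verbatim.

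Your dual route can also be completed, but the twirl that matters is on the system, not the ancilla. Write the enlarged SWAP as $F_{AB}\otimes F_{RR'}$. On the sector $F_{RR'}=-1$, the lifted constraint reads $X\otimes\Id+\Id\otimes Y\preceq\frac12(\Id+F)$. Twirling by $U_{\bm\theta}\otimes U_{\bm\theta}$ is legitimate because $\rho,\tau$ are simultaneously diagonal, and it makes a near-optimal dual pair diagonal. For diagonal $X,Y$, the constraints against $\frac12(\Id-F)$ and $\frac12(\Id+F)$ are equivalent on each block $\operatorname{span}\{\ket{ij},\ket{ji}\}$ via conjugation by $\diag(1,-1)$. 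On $\ket{ii}$ one has $x_i+y_i\leq0\leq1$. With strong duality (the dual is strictly feasible), this yields $T(\rho\otimes\gamma,\tau\otimes\gamma)\geq T(\rho,\tau)$. As written, though, your proposal names this obstacle without resolving it, so it does not yet prove the corollary.
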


For the channel-induced set \eqref{eq:intro-channel-couplings}, define
\begin{equation}\label{eq:intro-channel-cost}
    \widetilde{T}_C(\rho,\sigma)
    :=
    \min_{\omega\in\widetilde{\Gamma}(\rho,\sigma)}
    \operatorname{Tr}(C\omega).
\end{equation}
Here the obstruction occurs before one reaches the triangle inequality.

\begin{theorem}[Channel-induced no-go]\label{thm:intro-channel-no-go}
    Let $C=C^\dagger$ act on
    $\mathbb{C}^d\otimes\mathbb{C}^d$.  Suppose that
    $\widetilde{T}_C(\rho,\sigma)\geq 0$ for all states $\rho,\sigma$ and that
    $\widetilde{T}_C(\rho,\rho)=0$ for every state $\rho$.  Then $C=0$.
    Consequently, when $d\geq 2$, no fixed linear cost over channel-induced
    couplings can define a semidistance, and hence none can define a metric.
\end{theorem}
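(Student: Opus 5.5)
The plan is to convert the two hypotheses into pointwise statements about $C$: first show that $C\succeq 0$, then show that $C$ annihilates a spanning family of vectors.

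\emph{Step 1 (positivity).} Every bipartite state lies in some channel-induced coupling set. Indeed, for any density operator $\omega$ on $\mathbb{C}^d\otimes\mathbb{C}^d$, put $\rho:=(\operatorname{Tr}_B\omega)^\top$ and $\sigma:=\operatorname{Tr}_A\omega$. The transpose of a density operator is again a density operator, so $\rho$ and $\sigma$ are states, and $\omega\in\widetilde{\Gamma}(\rho,\sigma)$ by \eqref{eq:intro-channel-couplings}. The nonnegativity hypothesis then gives
\[
\operatorname{Tr}(C\omega)\;\geq\;\widetilde{T}_C(\rho,\sigma)\;\geq\;0
\]
for every state $\omega$. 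Applying this to pure states $\omega$ yields $C\succeq0$.

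\emph{Step 2 (zero self-cost on pure states).} Fix a unit vector $\psi\in\mathbb{C}^d$ and set $\rho=\ketbra{\psi}{\psi}$. Then $\rho^\top=\ketbra{\bar\psi}{\bar\psi}$ is pure. A bipartite state with a pure marginal must be a product state, so $\widetilde{\Gamma}(\rho,\rho)=\{\ketbra{\bar\psi\otimes\psi}{\bar\psi\otimes\psi}\}$. Hence the hypothesis $\widetilde{T}_C(\rho,\rho)=0$ reads
\[
\bra{\bar\psi\otimes\psi}\,C\,\ket{\bar\psi\otimes\psi}=0 .
\]
Since $C\succeq0$ by Step 1, this forces $C\ket{\bar\psi\otimes\psi}=0$.

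\emph{Step 3 (spanning).} It remains to check that the vectors $\bar\psi\otimes\psi$, for $\psi$ ranging over unit vectors, span $\mathbb{C}^d\otimes\mathbb{C}^d$. Use the linear isomorphism $\mathbb{C}^d\otimes\mathbb{C}^d\to M_d(\mathbb{C})$ defined by $\ket{i}\otimes\ket{j}\mapsto \ketbra{j}{i}$. Under this map, $\bar\psi\otimes\psi$ is sent to $\ketbra{\psi}{\psi}$. Rank-one projectors span all Hermitian matrices over $\mathbb{R}$, and hence span $M_d(\mathbb{C})$ over $\mathbb{C}$. This can be seen by polarization: $\ketbra{\psi}{\phi}$ is a complex combination of the projectors onto $\psi$, $\phi$, $\psi+\phi$ and $\psi+i\phi$. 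Therefore $C$ vanishes on a spanning set, so $C=0$.

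\emph{Consequence.} If $C=0$, then $\widetilde{T}_C\equiv0$. When $d\geq2$ there exist distinct states $\rho\neq\sigma$ with $\widetilde{T}_C(\rho,\sigma)=0$, so $\widetilde{T}_C$ cannot be a semidistance and hence cannot be a metric. Conversely, any $C$ for which $\widetilde{T}_C$ is a semidistance would satisfy both hypotheses and thus be $0$, a contradiction.

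The main point to get right is Step 2: the coupling set for a pure $\rho$ collapses to a single product state. This collapse is what turns the minimization into an exact expectation value. The transpose convention matters only through the spanning argument in Step 3, which works because $\bar\psi\otimes\psi$ corresponds to the projector $\ketbra{\psi}{\psi}$ rather than to some non-spanning family.
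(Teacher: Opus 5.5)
Your proof is correct and follows the paper's argument step for step. You obtain $C\succeq0$ by realizing every bipartite state as a channel-induced coupling, you use that $\widetilde{\Gamma}(\rho,\rho)$ collapses to $\ketbra{\bar\psi\otimes\psi}{\bar\psi\otimes\psi}$ for pure $\rho$, and you finish with a polarization-based spanning argument. The only cosmetic difference is that you polarize after identifying $\bar\psi\otimes\psi$ with $\ketbra{\psi}{\psi}$ via vectorization, whereas the paper writes the polarization identity directly on the tensors $\ket{\overline x}\otimes\ket y$.
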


The two theorems identify different failure modes.  For the standard
coupling set, zero self-cost leaves a nontrivial class of admissible
operators, but none satisfies the triangle inequality once $d\geq 3$.  For
the channel-induced set, zero self-cost is already incompatible with point
separation once $d\geq 2$.


\subsection{Technique overview}

The proof of Theorem~\ref{thm:intro-standard-no-go} begins with the
characterization of admissible costs from~\cite{CEFZ23}.  If $T_C$ is a
semidistance, then
\begin{equation}\label{eq:intro-admissible}
    C\succeq 0,
    \qquad
    \ker C=\operatorname{Sym}^2(\mathbb{C}^d).
\end{equation}
Thus $C$ vanishes on the symmetric subspace and is positive definite on the
antisymmetric subspace.  After compressing to an arbitrary
three-dimensional subspace, we use the fact that every orthonormal basis of
$\wedge^2\mathbb{C}^3$ is, up to phases, the wedge basis induced by some
one-particle basis.  Thus the restriction of $C$ to the antisymmetric sector
can be diagonalized by a common basis change $U\otimes U$.  Up to relabeling,
every admissible qutrit cost therefore reduces to
\begin{equation}\label{eq:intro-weighted-cost}
    C_{a,b,c}
    =
    a\ketbra{1\wedge 2}{1\wedge 2}
    +b\ketbra{1\wedge 3}{1\wedge 3}
    +c\ketbra{2\wedge 3}{2\wedge 3},
    \qquad a,b,c>0.
\end{equation}

For diagonal states, \cite[Proposition~1]{Mil26} gives the exact reduction
\begin{equation}\label{eq:intro-diagonal-formula}
    \begin{aligned}
    T_{C_{a,b,c}}(\diag p,\diag q)
    &=
    \frac12
    \min_{\gamma\in\Pi(p,q)}
    \left[
    a\left(\sqrt{\gamma_{12}}-\sqrt{\gamma_{21}}\right)^2
    +b\left(\sqrt{\gamma_{13}}-\sqrt{\gamma_{31}}\right)^2
    \right.\\[-0.2em]
    &\hspace{9em}\left.
    +c\left(\sqrt{\gamma_{23}}-\sqrt{\gamma_{32}}\right)^2
    \right].
    \end{aligned}
\end{equation}
The diagonal entries of the quantum transport plan form a classical
coupling $\gamma$, but positivity allows each
$\{\ket{ij},\ket{ji}\}$ block to saturate its coherence bound.  This produces
the square-root interference terms in \eqref{eq:intro-diagonal-formula}.
We then exhibit one triple of diagonal qutrit states and
evaluate its three pairwise costs.  After placing the smallest of $a,b,c$ in
one fixed antisymmetric direction, the same triple strictly satisfies
\eqref{eq:intro-T-violation} and \eqref{eq:intro-W-violation} for every
positive choice of the remaining weights.  This counterexample extends the
SWAP-specific obstruction in~\cite{Mil26} to the general no-go result of
Theorem~\ref{thm:intro-standard-no-go}.

For Corollary~\ref{cor:intro-stabilized-no-go}, we show that a
common ancillary state does not change the SWAP transport cost between
commuting states.  After simultaneous diagonalization, the cost can be
expressed through
\[
    \Phi(\pi):=\sum_{i,j}\sqrt{\pi_{ij}\pi_{ji}}
\]
for a classical coupling $\pi$.  Coarse-graining a coupling of
$p\otimes r$ and $q\otimes r$ cannot increase $\Phi$, by the
Cauchy--Schwarz inequality, while tensoring an optimal coupling of $p$ and
$q$ with the diagonal ancillary state attains the reverse bound.  Hence
$T_s=T$ on commuting pairs, and Miller's qutrit violation transfers
unchanged to the stabilized setting.

The proof of Theorem~\ref{thm:intro-channel-no-go} is shorter and uses a
different approach.  Every bipartite state $\omega$ is feasible in
\eqref{eq:intro-channel-couplings} for
\[
    \rho=(\operatorname{Tr}_B\omega)^\top,
    \qquad
    \sigma=\operatorname{Tr}_A\omega.
\]
Nonnegativity of all the minima in
\eqref{eq:intro-channel-cost} therefore implies
$\operatorname{Tr}(C\omega)\geq0$ for every bipartite state, and hence
$C\succeq0$.  If $\rho=\ketbra{\psi}{\psi}$ is pure, its only
channel-induced self-coupling is the pure product state on
$\overline{\psi}\otimes\psi$.  Zero self-cost and positivity force
\[
    C\ket{\overline{\psi}\otimes\psi}=0
    \quad\text{for every }\psi.
\]
These vectors span $\mathbb{C}^d\otimes\mathbb{C}^d$, so $C=0$.  No
triangle-inequality argument is needed.

\subsection{Discussion}

\paragraph{Scope of the standard-coupling no-go theorem.}
The dimension threshold $d\geq3$ is sharp.  In dimension two, admissibility
forces the cost to be a positive multiple of $P_-$, and the corresponding
square-root cost is a metric on the full state space~\cite{FECZ22,CEFZ23}.
The theorem is also specific to the full mixed-state space and does not rule
out metric behavior on restricted families.  For example, on pure states the
SWAP cost reduces to
$W_{P_-}=2^{-1/2}\sqrt{1-|\braket{\psi}{\phi}|^2}$, a constant multiple of
root infidelity and hence a metric~\cite[Proposition~1]{FECZ22}.

\paragraph{Failure of quantum gluing.}
Our standard-coupling no-go theorem identifies the triangle inequality as
the missing metric axiom, so it is useful to recall why that axiom holds
classically.  Choose nearly optimal couplings
$\pi_{12}\in\Pi(\mu_1,\mu_2)$ and $\pi_{23}\in\Pi(\mu_2,\mu_3)$.  The Gluing
Lemma~\cite[p.~23]{Vil09} produces a joint law $\pi_{123}$ with these
two pairwise marginals, whose $(1,3)$-marginal is therefore a coupling of
$\mu_1$ and $\mu_3$.  The ground-space triangle inequality, followed by
Minkowski's inequality under $\pi_{123}$, then gives
$W_p(\mu_1,\mu_3)\leq W_p(\mu_1,\mu_2)+W_p(\mu_2,\mu_3)$.  No analogous
principle holds for general quantum couplings: two bipartite states with a
common marginal need not admit a common tripartite extension.  Our theorem
shows that the missing gluing lemma is not merely a gap in the classical
proof; in the fixed-cost, one-shot framework, the triangle inequality
itself fails.

\paragraph{No fixed-cost coupling representation of trace distance.}
On a finite space with the discrete metric, $W_1$ is the total variation
distance.  Its quantum counterpart is the trace distance
$d_{\mathrm{tr}}(\rho,\sigma)=\frac12\|\rho-\sigma\|_1$.
Zhou, Yu, Ying, and Ying proved that no
Hermitian cost $H$ and bijection $f$ can satisfy
$d_{\mathrm{tr}}(\rho,\sigma)=f(T_H(\rho,\sigma))$ for all quantum
states~\cite[Theorem~7]{ZYYY22}.  Their theorem excludes recovering this
particular metric even after a reparametrization of the coupling cost,
whereas our result excludes metricity of $\sqrt{T_C}$ for every admissible
fixed cost when $d\geq3$.

\section{Preliminaries}\label{sec:preliminaries}

\subsection{Notation and coupling costs}

Let $\calH\cong\mathbb{C}^d$ be a finite-dimensional Hilbert space, and
let $\Dens(\calH)$ denote its density operators.  We fix an orthonormal basis
$\{\ket{1},\ldots,\ket{d}\}$ whenever a transpose, entrywise conjugate, or
Choi operator is used.  Subscripts on partial traces name the subsystem being
traced out.  The standard coupling set $\Gamma(\rho,\sigma)$ and its fixed-cost
value $T_C(\rho,\sigma)$ are as in
\eqref{eq:intro-standard-couplings} and \eqref{eq:intro-standard-cost}.

Let $F$ be the SWAP operator on $\calH\otimes\calH$, defined by
$F(x\otimes y)=y\otimes x$, and write
\begin{equation}\label{eq:symmetric-decomposition}
    \calH_{\mathrm{sym}}:=\operatorname{Sym}^2(\calH),
    \qquad
    \calH_{\mathrm{asym}}:=\wedge^2\calH,
    \qquad
    P_-:=\frac{\Id-F}{2}.
\end{equation}
For $\ket{x},\ket{y}\in\calH$, write
\begin{equation}\label{eq:wedge-basis}
    \ket{x\wedge y}
    :=\frac{\ket{x}\otimes\ket{y}-\ket{y}\otimes\ket{x}}{\sqrt{2}}.
\end{equation}
In particular, the vectors $\ket{i\wedge j}$ with $i<j$ form an
orthonormal basis of $\calH_{\mathrm{asym}}$.
Throughout, ``semidistance'' has the convention stated in the introduction:
it includes nonnegativity, symmetry, and point separation, but not the
triangle inequality.

\subsection{Channels and Choi-induced couplings}

Following De Palma and Trevisan~\cite{DT21}, we regard a quantum channel as
the quantum analogue of the stochastic map associated with a classical
transport plan.  For states $\rho,\sigma\in\Dens(\calH)$, let
\begin{equation}\label{eq:quantum-transport-plans}
    \mathsf{M}(\rho,\sigma)
    :=
    \left\{
        \calE:\mathcal{L}(\operatorname{supp}\rho)
        \to\mathcal{L}(\calH):
        \calE\text{ is a quantum channel and }\calE(\rho)=\sigma
    \right\}.
\end{equation}
The restriction of the input space to $\operatorname{supp}\rho$ is essential
for the correspondence below to be one-to-one when $\rho$ is not full rank.

To pass from a transport plan to a bipartite state, let
\[
    \ket{\Omega}:=\sum_{i=1}^d\ket{ii}
\]
be the unnormalized maximally entangled vector and define the canonical
purification
\begin{equation}\label{eq:canonical-purification}
    \ket{\Omega_\rho}
    :=(\sqrt{\rho^{\top}}\otimes\Id)\ket{\Omega}
    =(\Id\otimes\sqrt\rho)\ket{\Omega}.
\end{equation}
Its two marginals are
\[
    \operatorname{Tr}_B\ketbra{\Omega_\rho}{\Omega_\rho}=\rho^{\top},
    \qquad
    \operatorname{Tr}_A\ketbra{\Omega_\rho}{\Omega_\rho}=\rho.
\]
For $\calE\in\mathsf{M}(\rho,\sigma)$, set
\begin{equation}\label{eq:weighted-choi-state}
    \omega_{\rho,\calE}
    :=(\operatorname{id}\otimes\calE)
    \bigl(\ketbra{\Omega_\rho}{\Omega_\rho}\bigr).
\end{equation}
The two marginals are
\begin{equation}\label{eq:weighted-choi-marginals}
    \operatorname{Tr}_B\omega_{\rho,\calE}=\rho^{\top},
    \qquad
    \operatorname{Tr}_A\omega_{\rho,\calE}=\calE(\rho)=\sigma.
\end{equation}
This explains both the transpose in
\eqref{eq:intro-channel-couplings} and the inclusion
$\omega_{\rho,\calE}\in\widetilde\Gamma(\rho,\sigma)$.

\begin{proposition}[Channel--coupling correspondence~{\cite[Proposition~2]{DT21}}]
\label{prop:weighted-choi-correspondence}
    The map $\calE\mapsto\omega_{\rho,\calE}$ is a bijection from
    $\mathsf{M}(\rho,\sigma)$ onto
    $\widetilde\Gamma(\rho,\sigma)$.
\end{proposition}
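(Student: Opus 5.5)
We need to prove Proposition~\ref{prop:weighted-choi-correspondence}: the map $\calE\mapsto\omega_{\rho,\calE}$ is a bijection from $\mathsf{M}(\rho,\sigma)$ onto $\widetilde\Gamma(\rho,\sigma)$. The plan has three steps: well-definedness, injectivity, and surjectivity. Throughout, write $\Pi$ for the projector onto $S:=\operatorname{supp}\rho$ and $\rho^{-1}$ for the inverse of $\rho$ on $S$.

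\emph{Well-definedness.} Since $\ket{\Omega_\rho}=(\Id\otimes\sqrt\rho)\ket{\Omega}$, the second factor of $\ket{\Omega_\rho}$ lies in $S$. Hence $\operatorname{id}\otimes\calE$ can be applied even though $\calE$ is defined only on $\mathcal L(S)$. Complete positivity of $\calE$ gives $\omega_{\rho,\calE}\succeq0$. The marginal identities \eqref{eq:weighted-choi-marginals} have already been recorded: tracing out $B$ uses trace preservation, and tracing out $A$ gives $\calE(\operatorname{Tr}_A\ketbra{\Omega_\rho}{\Omega_\rho})=\calE(\rho)=\sigma$. Therefore $\omega_{\rho,\calE}\in\widetilde\Gamma(\rho,\sigma)$.

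\emph{Injectivity.} Write the Choi operator as $J(\calE):=(\operatorname{id}\otimes\calE)(\ketbra{\Omega_S}{\Omega_S})$, where $\ket{\Omega_S}:=(\Id\otimes\Pi)\ket\Omega$. Then
\[
\omega_{\rho,\calE}=(\sqrt{\rho^\top}\otimes\Id)\,J(\calE)\,(\sqrt{\rho^\top}\otimes\Id),
\]
because $(\sqrt{\rho^\top}\otimes\Id)\ket\Omega=(\Id\otimes\sqrt\rho)\ket\Omega$ and $\sqrt\rho=\sqrt\rho\,\Pi$. The operator $\sqrt{\rho^\top}$ is invertible on $\operatorname{supp}\rho^\top=\overline S$, and $J(\calE)$ is supported on $\overline S\otimes\calH$. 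So $J(\calE)$ is recovered by conjugating with $(\sqrt{\rho^\top})^{-1}\otimes\Id$. Since the Choi map on $\mathcal L(S)$ is injective, $\calE$ is determined by $\omega_{\rho,\calE}$.

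\emph{Surjectivity.} This is the main step. Let $\omega\in\widetilde\Gamma(\rho,\sigma)$. First I show that $\omega$ is supported on $\overline S\otimes\calH$. Indeed, $\operatorname{Tr}\bigl(((\Id-\Pi^\top)\otimes\Id)\,\omega\bigr)=\operatorname{Tr}\bigl((\Id-\Pi^\top)\rho^\top\bigr)=0$, and positivity of $\omega$ then forces $((\Id-\Pi^\top)\otimes\Id)\,\omega=0$. Next define
\[
J:=\bigl((\sqrt{\rho^\top})^{-1}\otimes\Id\bigr)\,\omega\,\bigl((\sqrt{\rho^\top})^{-1}\otimes\Id\bigr),
\]
with the inverse taken on $\overline S$. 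Then $J\succeq0$ and $\operatorname{Tr}_B J=(\rho^\top)^{-1/2}\rho^\top(\rho^\top)^{-1/2}=\Pi^\top=\Id_{\overline S}$. By the Choi--Jamio\l kowski theorem, $J$ is therefore the Choi operator of a CPTP map $\calE:\mathcal L(S)\to\mathcal L(\calH)$; the transpose convention is handled by $\ket\Omega$ and the identification of $\overline S$ with $S$. By construction $\omega_{\rho,\calE}=\omega$. Finally, $\calE(\rho)=\operatorname{Tr}_A\omega_{\rho,\calE}=\operatorname{Tr}_A\omega=\sigma$, so $\calE\in\mathsf{M}(\rho,\sigma)$.

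The only delicate point is the transpose/support bookkeeping: $\operatorname{supp}\rho^\top$ is the entrywise conjugate $\overline S$ of $S$. This has to match the support of $\ket{\Omega_S}$ on the first factor, and it does, because $(\Pi\otimes\Id)\ket\Omega=(\Id\otimes\Pi^\top)\ket\Omega$. Restricting the input space to $S$ is exactly what makes trace preservation match the condition $\operatorname{Tr}_B J=\Id_{\overline S}$ rather than $\Id$.
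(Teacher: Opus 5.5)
Your proof is correct. The paper does not prove this proposition itself but cites \cite[Proposition~2]{DT21}, and the Choi--Jamio\l kowski inversion of the factorization \eqref{eq:weighted-choi-operator} that you carry out (support check, conjugation by the inverse of $\sqrt{\rho^\top}$ on $\overline S$, then reading $\operatorname{Tr}_B J=\Id_{\overline S}$ as trace preservation) is the standard argument behind that result. Your use of the restricted vector $\ket{\Omega_S}$, instead of extending $\calE$ to all of $\calH$ as the paper does when writing \eqref{eq:weighted-choi-operator}, is an inessential and slightly cleaner variation.
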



For completeness, a transport plan on $\operatorname{supp}\rho$ can always
be extended to a channel on all of $\calH$.  For any such extension, let
\[
    J(\calE):=(\operatorname{id}\otimes\calE)
    \bigl(\ketbra{\Omega}{\Omega}\bigr)
\]
be its ordinary Choi operator, so that
$\operatorname{Tr}_B J(\calE)=\Id$.  Equation
\eqref{eq:weighted-choi-state} can then equivalently be written as
\begin{equation}\label{eq:weighted-choi-operator}
    \omega_{\rho,\calE}
    =
    (\sqrt{\rho^{\top}}\otimes\Id)
    J(\calE)
    (\sqrt{\rho^{\top}}\otimes\Id),
\end{equation}
which is independent of the chosen extension.  This is why we call
$\widetilde\Gamma(\rho,\sigma)$ the Choi-induced coupling set.

De Palma and Trevisan define the transport cost from self-adjoint observables
$R_1,\ldots,R_N$~\cite[Definition~5 and Eq.~(32)]{DT21}.  After swapping the
tensor factors to match our convention, set
\[
    D_i:=R_i^{\top}\otimes\Id-\Id\otimes R_i,
    \qquad
    C_{\mathbf R}:=\sum_{i=1}^N D_i^2.
\]
Their cost of a channel-induced coupling $\omega$ is then
\[
    \mathcal{C}_{\mathbf R}(\omega)
    :=\sum_{i=1}^N\operatorname{Tr}(D_i\omega D_i)
    =\operatorname{Tr}(C_{\mathbf R}\omega).
\]
Proposition~\ref{prop:weighted-choi-correspondence} therefore identifies
optimization over transport channels with
\[
    \min_{\calE\in\mathsf{M}(\rho,\sigma)}
    \mathcal{C}_{\mathbf R}(\omega_{\rho,\calE})
    =
    \min_{\omega\in\widetilde\Gamma(\rho,\sigma)}
    \operatorname{Tr}(C_{\mathbf R}\omega).
\]
We instead allow any Hermitian cost $C$ and use $\widetilde T_C$ from
\eqref{eq:intro-channel-cost}.

\section{Fixed-cost no-go for standard couplings}
\label{sec:standard-no-go}

The proof of Theorem~\ref{thm:intro-standard-no-go} first reduces an
arbitrary cost whose optimal value is a semidistance to three positive
weights on a qutrit, then solves the resulting optimization for diagonal
states.

\subsection{Structural reduction of admissible costs}

For simplicity, we use the following definition.
\begin{definition}[Admissible cost]\label{def:admissible-cost}
    We call a Hermitian cost $C$ on $\calH\otimes\calH$ admissible if $T_C$
    is a semidistance on $\Dens(\calH)$.
\end{definition}

In~\cite{CEFZ23}, they give a characterization for admissible cost operators, which we recall as follows.

\begin{theorem}[Admissible-cost characterization~{\cite[Theorem~6.3]{CEFZ23}}]\label{thm:admissible-costs}
    A Hermitian cost $C$ is admissible if and only if
    \begin{equation}\label{eq:admissible-cost}
        C\succeq0,
        \qquad
        \ker C=\calH_{\mathrm{sym}}.
    \end{equation}
\end{theorem}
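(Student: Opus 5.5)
The plan is to prove both directions directly from the definitions. The two ingredients are the behaviour of $T_C$ on pure states and the observation that any single coupling with negative cost already forces $T_C$ to be negative.

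\emph{Necessity.} Assume $C$ is admissible.

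First I show $C\succeq0$. Suppose $\bra{v}C\ket{v}<0$ for some unit vector $v$. Put $\omega=\ketbra{v}{v}$, $\rho=\operatorname{Tr}_B\omega$ and $\sigma=\operatorname{Tr}_A\omega$. Since $\omega\in\Gamma(\rho,\sigma)$, we get $T_C(\rho,\sigma)\le\operatorname{Tr}(C\omega)<0$, contradicting nonnegativity.

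Next I show $\calH_{\mathrm{sym}}\subseteq\ker C$. For a pure state $\rho=\ketbra{\psi}{\psi}$, the only element of $\Gamma(\rho,\rho)$ is $\ketbra{\psi\psi}{\psi\psi}$, because a state with a pure marginal must be a product. Zero self-cost therefore gives $\bra{\psi\psi}C\ket{\psi\psi}=0$. Together with $C\succeq0$ this yields $C\ket{\psi\psi}=0$. The vectors $\ket{\psi\psi}$ span $\operatorname{Sym}^2(\calH)$, which gives the inclusion.

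For the reverse inclusion, suppose $\ker C$ is strictly larger than $\calH_{\mathrm{sym}}$. Since $\ker C$ is a subspace containing $\calH_{\mathrm{sym}}$, it also contains a nonzero antisymmetric vector $a$. Write $a=\sum_k\lambda_k\ket{x_k\wedge y_k}$ in the normal form of an antisymmetric tensor, with $\{x_k,y_k\}_k$ orthonormal and $\lambda_k>0$. Add the symmetric vector $s=\sum_k\lambda_k(\ket{x_ky_k}+\ket{y_kx_k})/\sqrt2\in\ker C$. Then
\[
    v:=s+a=\sqrt2\sum_k\lambda_k\ket{x_k}\otimes\ket{y_k}\in\ker C .
\]
Its marginals are $\rho\propto\sum_k\lambda_k^2\ketbra{x_k}{x_k}$ and $\sigma\propto\sum_k\lambda_k^2\ketbra{y_k}{y_k}$. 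These have orthogonal supports, so $\rho\ne\sigma$. Yet $\ketbra{v}{v}/\|v\|^2\in\Gamma(\rho,\sigma)$ has zero cost, so $T_C(\rho,\sigma)=0$, contradicting point separation. The existence of this normal form (a Youla-type decomposition) is the only nontrivial linear-algebra input. If I wanted to avoid it, I would instead take a single $a\in\ker C\cap\calH_{\mathrm{asym}}$ and choose any symmetric $s$ for which the marginals of $s+a$ differ, using that the cross terms make $\operatorname{Tr}_B-\operatorname{Tr}_A$ nonzero.

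\emph{Sufficiency.} Assume $C\succeq0$ and $\ker C=\calH_{\mathrm{sym}}$.

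Nonnegativity is immediate from $C\succeq0$.

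For symmetry, note that $C$ is supported on $\calH_{\mathrm{asym}}$. Since $F$ acts as $-\Id$ there, $FCF=C$. The map $\omega\mapsto F\omega F$ is a bijection $\Gamma(\rho,\sigma)\to\Gamma(\sigma,\rho)$ that preserves cost, so $T_C(\rho,\sigma)=T_C(\sigma,\rho)$.

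For zero self-cost, diagonalize $\rho=\sum_i\lambda_i\ketbra{e_i}{e_i}$ and use the symmetric vector $\sum_i\sqrt{\lambda_i}\ket{e_ie_i}$. Both of its marginals equal $\rho$, and it lies in $\ker C$, so its cost is zero.

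For point separation, suppose $T_C(\rho,\sigma)=0$. A minimizer exists because $\Gamma(\rho,\sigma)$ is compact. Positivity of $C$ and $\omega$ forces the minimizer $\omega$ to be supported in $\ker C=\calH_{\mathrm{sym}}$. Hence $F\omega=\omega=\omega F$, so $F\omega F=\omega$, and the two marginals coincide, giving $\rho=\sigma$.

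I expect the main obstacle to be the inclusion $\ker C\subseteq\calH_{\mathrm{sym}}$, specifically producing a kernel vector whose marginals differ. The normal-form construction above is what I would use first.
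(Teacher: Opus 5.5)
Your proof is correct, and the paper has no proof of its own to compare it with: the statement is imported from \cite[Theorem~6.3]{CEFZ23} and used as a black box.

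Your necessity argument uses the same two devices the paper later applies to channel-induced couplings in Propositions~\ref{prop:channel-global-positivity} and~\ref{prop:channel-zero-self-cost}:
\begin{itemize}
\item every bipartite state is feasible for its own pair of marginals, which gives $C\succeq0$;
\item a pure state has a unique self-coupling, which combined with a spanning argument produces kernel vectors.
\end{itemize}
The difference between the two settings is instructive. In $\Gamma(\rho,\rho)$ the forced kernel vectors are $\ket{\psi}\otimes\ket{\psi}$, which span only $\operatorname{Sym}^2(\calH)$. In the channel-induced set they are $\ket{\overline\psi}\otimes\ket{\psi}$, which span everything. This is why nonzero admissible costs survive for standard couplings but not for channel-induced ones. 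It is also why you need the extra point-separation step to exclude antisymmetric kernel vectors.

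Your Slater/Youla construction for that step is fine. The fallback you sketch can be made fully explicit without any normal form. Let $A$ be the antisymmetric coefficient matrix of $a$ and $S$ the symmetric coefficient matrix of $s$. Then the unnormalized marginals of $v=s+a$ satisfy $\operatorname{Tr}_B\ketbra{v}{v}-\operatorname{Tr}_A\ketbra{v}{v}=2(SA^\dagger+AS^\dagger)$. Choosing $s=\ket{\Omega}$ gives $2(A+A^\dagger)$, and choosing $s=\mathrm{i}\ket{\Omega}$ gives $2\mathrm{i}(A^\dagger-A)$. These cannot both vanish unless $A=0$. Both marginals have the same trace $\|v\|^2>0$, so the normalized marginals then differ as well.

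Your necessity direction never uses symmetry of $T_C$. So you have proved slightly more than stated: nonnegativity and vanishing exactly on the diagonal already force \eqref{eq:admissible-cost}, and symmetry then follows from $FCF=C$.
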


In particular, an admissible cost vanishes on the entire symmetric subspace
and is positive definite on the antisymmetric subspace.  This is stronger
than requiring zero expectation only on pure tensors $\ket{\psi}\otimes \ket{\psi}$.

\begin{lemma}[Restriction to a common support]\label{lem:common-support}
    Let $\rho$ and $\sigma$ be supported on $\calK\subseteq\calH$, and let
    $P_{\calK}$ be the orthogonal projection onto $\calK$.  Define the
    compression of $C$ to $\calK\otimes\calK$ by
    \[
        C_{\calK}
        :=(P_{\calK}\otimes P_{\calK})
          C
          (P_{\calK}\otimes P_{\calK}),
    \]
    viewed as an operator on $\calK\otimes\calK$.
    Then
    \[
        T_C(\rho,\sigma)=T_{C_{\calK}}(\rho,\sigma).
    \]
    If $C$ is admissible on $\calH$, then $C_{\calK}$ is admissible on
    $\calK$.
\end{lemma}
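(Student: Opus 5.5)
The argument has two parts: first identify the coupling sets, then read off the kernel of the compression.

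\emph{Step 1: couplings are supported in $\calK\otimes\calK$.} Let $\omega\in\Gamma(\rho,\sigma)$ and write $Q:=\Id-P_{\calK}$. Since $\operatorname{Tr}_B\omega=\rho$ and $\rho$ is supported on $\calK$,
\[
\operatorname{Tr}\bigl((Q\otimes\Id)\omega\bigr)=\operatorname{Tr}(Q\rho)=0 .
\]
Because $\omega\succeq0$, this forces $(Q\otimes\Id)\omega=0$; the usual argument is that $\omega^{1/2}(Q\otimes\Id)$ has zero Hilbert--Schmidt norm. Hence $\omega=(P_{\calK}\otimes\Id)\,\omega\,(P_{\calK}\otimes\Id)$. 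The same reasoning with $\sigma$ on the second factor gives
\[
\omega=(P_{\calK}\otimes P_{\calK})\,\omega\,(P_{\calK}\otimes P_{\calK}).
\]
So $\Gamma(\rho,\sigma)$ is exactly the set of couplings of $\rho,\sigma$ regarded as operators on $\calK\otimes\calK$. Conversely, any coupling on $\calK\otimes\calK$ embeds back into $\calH\otimes\calH$ with the same marginals. For every such $\omega$, cyclicity of the trace gives $\operatorname{Tr}(C\omega)=\operatorname{Tr}(C_{\calK}\omega)$. Minimizing over the identified sets yields $T_C(\rho,\sigma)=T_{C_{\calK}}(\rho,\sigma)$.

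\emph{Step 2: admissibility of $C_{\calK}$, via Theorem~\ref{thm:admissible-costs}.} One could argue directly from the definition, since states on $\calK$ are states on $\calH$ and Step 1 transfers the semidistance property. I will nevertheless verify the characterization, because later sections use the kernel description. Positivity is immediate, as $C_{\calK}$ is a compression of $C\succeq0$.

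For the kernel, let $P_{\calK}^{\otimes2}$ denote $P_{\calK}\otimes P_{\calK}$. This operator commutes with $F$, so it maps $\calH_{\mathrm{sym}}$ into $\operatorname{Sym}^2(\calK)$ and $\calH_{\mathrm{asym}}$ into $\wedge^2\calK$; in particular $\operatorname{Sym}^2(\calK)\subseteq\calH_{\mathrm{sym}}$. Take $v\in\calK\otimes\calK$. Then
\[
\langle v, C_{\calK} v\rangle=\langle v, Cv\rangle ,
\]
which vanishes if and only if $Cv=0$ (using $C\succeq0$), i.e.\ if and only if $v\in\calH_{\mathrm{sym}}$. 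For $v\in\calK\otimes\calK$ this is equivalent to $v\in\operatorname{Sym}^2(\calK)$. Since $C_{\calK}\succeq0$, its kernel coincides with the set of vectors on which its quadratic form vanishes, so $\ker C_{\calK}=\operatorname{Sym}^2(\calK)$. Theorem~\ref{thm:admissible-costs}, applied on $\calK$, then gives admissibility.

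The only point needing care is the support argument in Step 1, namely that a vanishing marginal trace against a projector annihilates the whole positive operator. Everything else is routine.
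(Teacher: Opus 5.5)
Your proposal is correct and follows the paper's proof essentially step for step. It uses the same Hilbert--Schmidt support argument to place every coupling in $\calK\otimes\calK$, and the same kernel computation $\ker C_{\calK}=(\calK\otimes\calK)\cap\ker C=\operatorname{Sym}^2(\calK)$ fed into Theorem~\ref{thm:admissible-costs}. You also add two small, correct points the paper leaves implicit: that couplings on $\calK\otimes\calK$ embed back into $\calH\otimes\calH$, and that admissibility of $C_{\calK}$ already follows directly from the definition via Step 1.
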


\begin{proof}
    Set $Q_{\calK}:=\Id-P_{\calK}$ and let
    $\omega\in\Gamma(\rho,\sigma)$.  Since $\rho$ is supported on $\calK$,
    \[
        \operatorname{Tr}\bigl((Q_{\calK}\otimes\Id)\omega\bigr)
        =\operatorname{Tr}\bigl(Q_{\calK}\operatorname{Tr}_B\omega\bigr)
        =\operatorname{Tr}(Q_{\calK}\rho)=0.
    \]
    Positivity of $\omega$ then gives
    \[
        0
        =\operatorname{Tr}\bigl(
          \omega^{1/2}(Q_{\calK}\otimes\Id)\omega^{1/2}
          \bigr)
        =\bigl\|(Q_{\calK}\otimes\Id)\omega^{1/2}\bigr\|_2^2,
    \]
    and hence $(Q_{\calK}\otimes\Id)\omega=0$; taking adjoints also gives
    $\omega(Q_{\calK}\otimes\Id)=0$.  Applying the same argument to the
    second marginal gives the corresponding two identities with
    $\Id\otimes Q_{\calK}$.  Therefore
    \[
        \omega
        =(P_{\calK}\otimes P_{\calK})\omega
         (P_{\calK}\otimes P_{\calK}),
    \]
    so every feasible coupling is supported on $\calK\otimes\calK$.  By
    cyclicity of the trace,
    \[
        \operatorname{Tr}(C\omega)
        =\operatorname{Tr}(C_{\calK}\omega).
    \]
    Taking the minimum over $\Gamma(\rho,\sigma)$ proves the cost identity.

    If $C$ is admissible, then \eqref{eq:admissible-cost} holds.  Since
    $\langle x,C_{\calK}x\rangle=\langle x,Cx\rangle$ for every
    $x\in\calK\otimes\calK$, we have $C_{\calK}\succeq0$ and
    \[
        \ker C_{\calK}
        =(\calK\otimes\calK)\cap\ker C
        =(\calK\otimes\calK)\cap\operatorname{Sym}^2(\calH)
        =\operatorname{Sym}^2(\calK).
    \]
    Theorem~\ref{thm:admissible-costs} now implies that $C_{\calK}$ is
    admissible on $\calK$.
\end{proof}

The three-dimensional antisymmetric representation permits a further
normal-form reduction that is special to qutrits.

\begin{lemma}[Qutrit antisymmetric normal form]\label{lem:qutrit-normal-form}
    Let $\calK\cong\mathbb{C}^3$, and let $C\succeq0$ satisfy
    $\ker C=\operatorname{Sym}^2(\calK)$.  There exist a unitary $U$ on
    $\calK$ and positive numbers $a,b,c$ such that
    \begin{equation}\label{eq:qutrit-normal-form}
        (U\otimes U)^\dagger C(U\otimes U)
        =a\ketbra{1\wedge2}{1\wedge2}
         +b\ketbra{1\wedge3}{1\wedge3}
         +c\ketbra{2\wedge3}{2\wedge3}.
    \end{equation}
    After a permutation of the basis, one may additionally impose
    $b=\min\{a,b,c\}$.
\end{lemma}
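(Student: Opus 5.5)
Since $\ker C=\operatorname{Sym}^2(\calK)$ and $C\succeq0$, the operator $C$ commutes with $F$ and is supported on $\wedge^2\calK$, where it is positive definite. We diagonalize the restriction $C|_{\wedge^2\calK}$ with an orthonormal eigenbasis $v_1,v_2,v_3$ of $\wedge^2\calK\cong\mathbb{C}^3$ and positive eigenvalues $\lambda_1,\lambda_2,\lambda_3$. The goal is then to find a unitary $U$ with $(U\otimes U)\ket{i\wedge j}$ equal to the $v_k$ up to phases. Phases are harmless because only the rank-one projectors $\ketbra{v_k}{v_k}$ enter $C$.

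The key step uses the Hodge-star isomorphism. For $\dim\calK=3$, the map $\ket{x}\mapsto$ the antisymmetric vector obtained by contracting $\ket{\bar x}$ with the totally antisymmetric tensor $\epsilon$ identifies $\wedge^2\calK$ with $\overline{\calK}$. Concretely, set
\[
\star\ket{1}=\ket{2\wedge3},\qquad \star\ket{2}=-\ket{1\wedge3},\qquad \star\ket{3}=\ket{1\wedge2},
\]
and extend antilinearly. This map is an antiunitary bijection $\calK\to\wedge^2\calK$. Moreover, for every unitary $U$ on $\calK$,
\[
(U\otimes U)\,\star x=\det(U)\,\star(\overline{U}^{\,\cdots})\,x .
\]
More simply: for an orthonormal basis $e_1,e_2,e_3$ of $\calK$, the vectors $\ket{e_2\wedge e_3},\ket{e_1\wedge e_3},\ket{e_1\wedge e_2}$ form an orthonormal basis of $\wedge^2\calK$. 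Each $\ket{e_i\wedge e_j}$ is a unit decomposable bivector orthogonal to $e_k$ in the appropriate sense.

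We construct $U$ directly. Every vector in $\wedge^2\mathbb{C}^3$ is decomposable, since rank of an antisymmetric $3\times3$ matrix is $0$ or $2$. So write $v_3=\ket{x\wedge y}$ with $x,y$ orthonormal, using the normal form of antisymmetric matrices (Youla decomposition). Let $z$ complete $x,y$ to an orthonormal basis.

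The remaining vectors $v_1,v_2$ span the orthogonal complement of $\ket{x\wedge y}$, which is $\{\ket{z\wedge w}: w\in\operatorname{span}(x,y)\}$. Indeed $\ket{x\wedge z}$ and $\ket{y\wedge z}$ are orthonormal and orthogonal to $\ket{x\wedge y}$. Hence $v_1=\ket{w_1\wedge z}$ and $v_2=\ket{w_2\wedge z}$ for some $w_1,w_2\in\operatorname{span}(x,y)$. Orthonormality of $v_1,v_2$ gives $\langle w_1,w_2\rangle=0$ and $\|w_i\|=1$, because $\langle a\wedge z,b\wedge z\rangle=\langle a,b\rangle$ for $a,b\perp z$.

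Since $\ket{w_1\wedge w_2}$ spans the same line as $\ket{x\wedge y}$, we have $v_3=e^{i\theta}\ket{w_1\wedge w_2}$. Define $U$ by $U\ket{1}=w_1$, $U\ket{2}=w_2$, $U\ket{3}=z$. Then $(U\otimes U)\ket{1\wedge2}$, $(U\otimes U)\ket{1\wedge3}$, and $(U\otimes U)\ket{2\wedge3}$ equal $v_3$, $v_1$, and $-v_2$ respectively, up to phases. Conjugating $C=\sum_k\lambda_k\ketbra{v_k}{v_k}$ by $U\otimes U$ gives the claimed form, with $a=\lambda_3$, $b=\lambda_1$, $c=\lambda_2$.

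For the final claim, a permutation matrix $\Pi$ acts by $(\Pi\otimes\Pi)\ket{i\wedge j}=\pm\ket{\pi(i)\wedge\pi(j)}$. Permuting the basis therefore permutes the three pairs $\{1,2\},\{1,3\},\{2,3\}$ transitively, so any chosen weight can be moved to the $\ket{1\wedge3}$ slot; we replace $U$ by $U\Pi$. The main obstacle is the decomposability and orthogonal-complement step. It rests on the fact that rank-two antisymmetric matrices are exactly the decomposable bivectors, and on checking that the orthogonal complement of $x\wedge y$ consists of the bivectors $w\wedge z$.
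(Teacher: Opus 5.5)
Your proof is correct, but after introducing the Hodge map you switch to an argument genuinely different from the paper's. The paper keeps the antiunitary map (its $J$ is exactly your $\star$) and sets $\ket{u_i}:=\star^{-1}\ket{\phi_i}$ for the eigenvectors. Its cofactor computation via $\operatorname{cof}_{r1}(U)=(\det U)\overline{U_{r1}}$ is precisely the identity $(U\otimes U)\star x=(\det U)\star(Ux)$, checked on basis vectors. That is the identity your display leaves unfinished: the $\overline{U}^{\,\cdots}$ should simply be $U$. This delivers all three one-particle vectors at once, symmetrically in the three eigenvectors, and decomposability of every vector in $\wedge^2\mathbb{C}^3$ falls out as a by-product rather than being an input. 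You instead build the basis sequentially from elementary facts. First, $v_3$ is decomposable because antisymmetric $3\times3$ matrices have even rank (Youla normal form). Second, $\ket{x\wedge y}^{\perp}\cap\wedge^2\calK$ equals $\{\ket{w\wedge z}:w\perp z\}$. Third, the isometry $\langle a\wedge z,b\wedge z\rangle=\langle a,b\rangle$ on $z^\perp$ pulls orthonormality back to $w_1,w_2$. Your route avoids all cofactor bookkeeping and gives a transparent geometric picture, at the cost of importing the antisymmetric normal form and breaking the symmetry among the eigenvectors. The paper's route is self-contained once the adjugate formula is granted. Since your final argument never uses $\star$, delete that paragraph, including the vague remark about being ``orthogonal to $e_k$ in the appropriate sense.'' One harmless slip: $(U\otimes U)\ket{2\wedge3}=\ket{w_2\wedge z}=v_2$ exactly, not $-v_2$, and only the projectors matter anyway.
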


\begin{proof}
    Since $C$ is self-adjoint,
    \[
        \operatorname{ran}C=(\ker C)^\perp=\wedge^2\calK.
    \]
    Thus $C$ vanishes on $\operatorname{Sym}^2(\calK)$ and restricts to a
    positive-definite operator on $\wedge^2\calK$.  The spectral theorem
    therefore provides positive numbers
    $\lambda_1,\lambda_2,\lambda_3$ and an orthonormal basis
    $\ket{\phi_1},\ket{\phi_2},\ket{\phi_3}$ of $\wedge^2\calK$ such that
    \[
        C=\sum_{i=1}^3\lambda_i\ketbra{\phi_i}{\phi_i}.
    \]

    Define $J:\calK\to\wedge^2\calK$ on the fixed basis by
    \[
        J\ket1=\ket{2\wedge3},
        \qquad
        J\ket2=-\ket{1\wedge3},
        \qquad
        J\ket3=\ket{1\wedge2},
    \]
    and extend it conjugate-linearly, meaning that
    \[
        J(\alpha x+\beta y)
        =\overline\alpha Jx+\overline\beta Jy.
    \]
    The vectors $\ket{2\wedge3}$, $-\ket{1\wedge3}$, and
    $\ket{1\wedge2}$ form an orthonormal basis of $\wedge^2\calK$.  Thus
    $J$ is bijective and antiunitary: for all $x,y\in\calK$,
    \[
        \langle Jx,Jy\rangle=\overline{\langle x,y\rangle},
        \qquad
        \|Jx\|=\|x\|.
    \]
    Set $\ket{u_i}:=J^{-1}\ket{\phi_i}$.  Since $J^{-1}$ is also
    antiunitary, $\ket{u_1},\ket{u_2},\ket{u_3}$ form an orthonormal basis
    of $\calK$.  Let $U$ be the unitary determined by
    $U\ket i=\ket{u_i}$, and write
    \[
        \ket{u_i}=\sum_{r=1}^3 U_{ri}\ket r.
    \]
    We now verify explicitly how $J$ intertwines $U$ with its second
    exterior power.  Bilinearity of the wedge product gives
    \begin{align*}
        \ket{u_2\wedge u_3}
        ={}&(U_{22}U_{33}-U_{32}U_{23})\ket{2\wedge3}\\
          &+(U_{12}U_{33}-U_{32}U_{13})\ket{1\wedge3}\\
          &+(U_{12}U_{23}-U_{22}U_{13})\ket{1\wedge2}.
    \end{align*}
    In terms of the matrix cofactors, the three coefficients on the
    right-hand side are, respectively,
    \[
        \operatorname{cof}_{11}(U),\qquad
        -\operatorname{cof}_{21}(U),\qquad
        \operatorname{cof}_{31}(U).
    \]
    The adjugate identity and unitarity of $U$ imply
    \[
        \operatorname{cof}_{r1}(U)
        =(\det U)(U^{-1})_{1r}
        =(\det U)\overline{U_{r1}}.
    \]
    Consequently,
    \[
        \ket{u_2\wedge u_3}
        =(\det U)\bigl(
            \overline{U_{11}}\ket{2\wedge3}
           -\overline{U_{21}}\ket{1\wedge3}
           +\overline{U_{31}}\ket{1\wedge2}
          \bigr)
        =(\det U)J\ket{u_1}.
    \]
    The two cyclic analogues are
    \[
        \ket{u_3\wedge u_1}=(\det U)J\ket{u_2},
        \qquad
        \ket{u_1\wedge u_2}=(\det U)J\ket{u_3}.
    \]
    Since $J\ket{u_i}=\ket{\phi_i}$ and
    $\ket{u_3\wedge u_1}=-\ket{u_1\wedge u_3}$, we obtain
    \[
        \ket{\phi_1}=(\det U)^{-1}\ket{u_2\wedge u_3},
        \qquad
        \ket{\phi_2}=-(\det U)^{-1}\ket{u_1\wedge u_3},
        \qquad
        \ket{\phi_3}=(\det U)^{-1}\ket{u_1\wedge u_2}.
    \]
    Because $\abs{\det U}=1$, the phase factors disappear upon passing to
    rank-one projectors.  Substituting into the spectral decomposition of
    $C$ yields
    \[
        C
        =\lambda_1\ketbra{u_2\wedge u_3}{u_2\wedge u_3}
         +\lambda_2\ketbra{u_1\wedge u_3}{u_1\wedge u_3}
         +\lambda_3\ketbra{u_1\wedge u_2}{u_1\wedge u_2}.
    \]
    On the other hand, the definition of the wedge product gives
    \[
        (U\otimes U)\ket{i\wedge j}=\ket{u_i\wedge u_j}.
    \]
    Conjugating the preceding spectral decomposition by $U\otimes U$
    therefore gives
    \[
        (U\otimes U)^\dagger C(U\otimes U)
        =\lambda_3\ketbra{1\wedge2}{1\wedge2}
         +\lambda_2\ketbra{1\wedge3}{1\wedge3}
         +\lambda_1\ketbra{2\wedge3}{2\wedge3}.
    \]
    Thus \eqref{eq:qutrit-normal-form} holds with
    $(a,b,c)=(\lambda_3,\lambda_2,\lambda_1)$.  Finally, a permutation of
    the one-particle basis induces the corresponding permutation of the
    three unordered pairs $\{1,2\},\{1,3\},\{2,3\}$.  We may therefore
    send whichever pair carries the smallest coefficient to $\{1,3\}$,
    which gives $b=\min\{a,b,c\}$.
\end{proof}

\subsection{Exact reduction for diagonal states}

For probability vectors $p,q\in\mathbb{R}_+^n$, define their classical
coupling polytope by
\begin{equation}\label{eq:classical-coupling-polytope}
    \Gamma_{\mathrm{cl}}(p,q)
    :=
    \left\{
        \gamma=(\gamma_{ij})_{i,j=1}^n\geq0:
        \sum_j\gamma_{ij}=p_i,\ 
        \sum_i\gamma_{ij}=q_j
    \right\}.
\end{equation}
Consider a cost that is diagonal in the antisymmetric basis,
\begin{equation}\label{eq:weighted-antisymmetric-cost}
    C_{\bm c}
    :=\sum_{1\leq i<j\leq n}
    c_{ij}\ketbra{i\wedge j}{i\wedge j},
    \qquad c_{ij}\geq0.
\end{equation}
The following is the fixed-cost specialization of Miller's general
diagonal-state formula.

\begin{lemma}[Diagonal-state reduction~{\cite[Proposition~1]{Mil26}}]
\label{lem:diagonal-state-reduction}
    For $C_{\bm c}$ as in \eqref{eq:weighted-antisymmetric-cost},
    \begin{equation}\label{eq:diagonal-state-reduction}
        T_{C_{\bm c}}(\diag p,\diag q)
        =\frac12
        \min_{\gamma\in\Gamma_{\mathrm{cl}}(p,q)}
        \sum_{i<j}c_{ij}
        \left(\sqrt{\gamma_{ij}}-\sqrt{\gamma_{ji}}\right)^2.
    \end{equation}
\end{lemma}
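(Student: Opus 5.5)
We prove the formula by matching a lower bound, valid for every coupling, with an upper bound obtained from an explicit coupling. Write $\rho=\diag p$, $\sigma=\diag q$ and let $\omega\in\Gamma(\rho,\sigma)$.

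\textbf{Lower bound.} Set $\gamma_{ij}:=\langle ij|\omega|ij\rangle\geq0$. Since the marginals are $\diag p$ and $\diag q$, summing the diagonal entries of $\omega$ gives $\sum_j\gamma_{ij}=p_i$ and $\sum_i\gamma_{ij}=q_j$. Hence $\gamma\in\Gamma_{\mathrm{cl}}(p,q)$. Next we evaluate the cost. For $i<j$,
\[
\langle i\wedge j|\omega|i\wedge j\rangle=\tfrac12\bigl(\gamma_{ij}+\gamma_{ji}-2\operatorname{Re}\langle ij|\omega|ji\rangle\bigr),
\]
and the cost is $\operatorname{Tr}(C_{\bm c}\omega)=\sum_{i<j}c_{ij}\langle i\wedge j|\omega|i\wedge j\rangle$. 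The $2\times2$ compression of $\omega$ to $\operatorname{span}\{\ket{ij},\ket{ji}\}$ is positive semidefinite, so $|\langle ij|\omega|ji\rangle|\leq\sqrt{\gamma_{ij}\gamma_{ji}}$. Substituting gives
\[
\operatorname{Tr}(C_{\bm c}\omega)\geq\tfrac12\sum_{i<j}c_{ij}\bigl(\sqrt{\gamma_{ij}}-\sqrt{\gamma_{ji}}\bigr)^2 .
\]
Minimizing over $\gamma$ yields ``$\geq$'' in \eqref{eq:diagonal-state-reduction}. The minimum exists because $\Gamma_{\mathrm{cl}}(p,q)$ is compact and the objective is continuous.

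\textbf{Upper bound.} Given any $\gamma\in\Gamma_{\mathrm{cl}}(p,q)$, we build a coupling that attains its value. Define
\[
\omega:=\sum_i\gamma_{ii}\ketbra{ii}{ii}+\sum_{i<j}\ketbra{v_{ij}}{v_{ij}},\qquad \ket{v_{ij}}:=\sqrt{\gamma_{ij}}\ket{ij}+\sqrt{\gamma_{ji}}\ket{ji}.
\]
This $\omega$ is positive semidefinite as a sum of rank-one positive terms. The only point needing care is the marginals: the off-diagonal terms $\ketbra{ij}{ji}$ contribute nothing to either partial trace, because their cross terms $\ketbra{i}{j}\operatorname{Tr}\ketbra{j}{i}$ vanish for $i\neq j$. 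The marginals are therefore $\diag p$ and $\diag q$, so $\omega\in\Gamma(\rho,\sigma)$.

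For this coupling the block $\{ij,ji\}$ is rank one, so the coherence bound is saturated: $\langle ij|\omega|ji\rangle=\sqrt{\gamma_{ij}\gamma_{ji}}$. Hence $\operatorname{Tr}(C_{\bm c}\omega)$ equals exactly $\tfrac12\sum_{i<j}c_{ij}(\sqrt{\gamma_{ij}}-\sqrt{\gamma_{ji}})^2$. Taking $\gamma$ to be the minimizer gives ``$\leq$''.

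Both directions are short computations. The point most worth checking carefully is the vanishing of the cross terms in the partial traces, which is what makes the saturating construction feasible.
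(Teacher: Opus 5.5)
Your proof is correct and follows the same route as the paper. The diagonal entries of $\omega$ give a classical coupling, positivity of each $\{\ket{ij},\ket{ji}\}$ block bounds the coherence by $\sqrt{\gamma_{ij}\gamma_{ji}}$, and a rank-one block saturates that bound. The one difference is that the paper first applies a diagonal-unitary twirl to make $\omega$ block diagonal, whereas you observe directly that the cost and the marginal sums involve only the $2\times2$ principal compressions, so the twirl is unnecessary; you also check the marginals of the saturating coupling more explicitly than the paper does.
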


\begin{proof}
    For an arbitrary quantum coupling $\omega$, define its
    diagonal-unitary twirl by
    \[
        \mathcal{T}(\omega)
        :=
        \int_{\mathbb{T}^n}
        (U_{\bm\theta}\otimes U_{\bm\theta})\omega
        (U_{\bm\theta}^\dagger\otimes U_{\bm\theta}^\dagger)
        \,d\mu(\bm\theta),
        \qquad
        U_{\bm\theta}:=\sum_k e^{\mathrm{i}\theta_k}\ketbra{k}{k}.
    \]
    Here $\mathbb{T}^n=[0,2\pi)^n$ and
    $d\mu(\bm\theta)=(2\pi)^{-n}\,d\theta_1\cdots d\theta_n$ is normalized
    Haar measure; equivalently, $\theta_1,\ldots,\theta_n$ are independent
    and uniformly distributed on $[0,2\pi)$.  Since the marginals are
    diagonal and $C_{\bm c}$ commutes with
    $U_{\bm\theta}\otimes U_{\bm\theta}$, $\mathcal{T}(\omega)$ has the same
    marginals and the same expectation of $C_{\bm c}$ as $\omega$.
    The averaged state has a scalar block on each $\ket{ii}$
    and, for every $i<j$, a block on
    $\operatorname{span}\{\ket{ij},\ket{ji}\}$ of the form
    \[
        \begin{pmatrix}
            \gamma_{ij}&z_{ij}\\
            \overline{z_{ij}}&\gamma_{ji}
        \end{pmatrix},
        \qquad
        |z_{ij}|\leq\sqrt{\gamma_{ij}\gamma_{ji}}.
    \]
    Its diagonal entries form a classical coupling
    $\gamma\in\Gamma_{\mathrm{cl}}(p,q)$.  The contribution of the
    $(i,j)$ block is
    \begin{align*}
        \frac{c_{ij}}2
        \bigl(\gamma_{ij}+\gamma_{ji}-2\operatorname{Re}z_{ij}\bigr)
        &\geq
        \frac{c_{ij}}2
        \left(\sqrt{\gamma_{ij}}-\sqrt{\gamma_{ji}}\right)^2.
    \end{align*}
    Equality is attained in every block by taking
    $z_{ij}=\sqrt{\gamma_{ij}\gamma_{ji}}$.  These blocks, together with
    the diagonal entries $\gamma_{ii}$, form a positive coupling with the
    required marginals.  Minimizing over $\gamma$ proves
    \eqref{eq:diagonal-state-reduction}.
\end{proof}

The variables $\gamma_{ij}$ are classical, but the square-root terms in
\eqref{eq:diagonal-state-reduction} come from coherent off-diagonal entries
of the quantum coupling.  Commuting marginals therefore do not reduce this
optimization to a classical linear transport problem.

\subsection{A uniform weighted qutrit counterexample}

For positive $a,b,c$, let
\begin{equation}\label{eq:weighted-qutrit-cost}
    C_{a,b,c}
    :=a\ketbra{1\wedge2}{1\wedge2}
      +b\ketbra{1\wedge3}{1\wedge3}
      +c\ketbra{2\wedge3}{2\wedge3}.
\end{equation}
After relabeling the qutrit basis, assume that
\begin{equation}\label{eq:minimum-weight}
    0<b\leq a,
    \qquad
    b\leq c.
\end{equation}
Let
\begin{equation}\label{eq:qutrit-triple}
    \begin{aligned}
        p&=\left(\frac34,\frac14,0\right),
        &\rho&=\diag p,\\
        q&=\left(\frac14,\frac58,\frac18\right),
        &\sigma&=\diag q,\\
        r&=\left(\frac18,\frac58,\frac14\right),
        &\tau&=\diag r.
    \end{aligned}
\end{equation}
This is the same commuting triple for every choice of the three weights.

\begin{lemma}\label{lem:square-root-monotonicity}
    For $A\geq B>0$, the function
    \[
        g_{A,B}(\lambda)
        :=\left(\sqrt{A-\lambda}-\sqrt{B-\lambda}\right)^2
    \]
    is nondecreasing for $0\leq\lambda<B$.
\end{lemma}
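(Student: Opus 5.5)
The plan is to differentiate directly on the open interval $0<\lambda<B$ and check the sign. Continuity of $g_{A,B}$ at $\lambda=0$ then extends monotonicity to $[0,B)$. Write $s:=\sqrt{A-\lambda}$ and $t:=\sqrt{B-\lambda}$. Since $A\geq B>\lambda$, we have $s\geq t>0$, so both square roots are differentiable on $(0,B)$, with $s'=-1/(2s)$ and $t'=-1/(2t)$.

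By the chain rule,
\[
    g_{A,B}'(\lambda)=2(s-t)\left(-\frac{1}{2s}+\frac{1}{2t}\right)
    =(s-t)\cdot\frac{s-t}{st}
    =\frac{(s-t)^2}{st}\geq0 .
\]
Hence $g_{A,B}$ is nondecreasing on $(0,B)$. Because $g_{A,B}$ is continuous on $[0,B)$, the mean value theorem extends this to the half-open interval. An equivalent derivative-free route is to expand $g_{A,B}(\lambda)=A+B-2\lambda-2\sqrt{(A-\lambda)(B-\lambda)}$ and check that $\lambda\mapsto 2\lambda+2\sqrt{(A-\lambda)(B-\lambda)}$ is nonincreasing. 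I would keep the derivative computation, since it is one line.

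There is essentially no obstacle. The only care needed is the domain. Differentiability requires $\lambda<B$ strictly, so $t>0$. The hypothesis $A\geq B$ is what gives $s-t\geq0$. The sign of the derivative does not actually depend on it, because the factor $(s-t)^2$ is a square, but the hypothesis keeps the expression well defined whenever $A\geq B$. In the subsequent counterexample, I expect this lemma to be used by taking $\lambda$ to be the mass shifted between the entries $\gamma_{ij}$ and $\gamma_{ji}$ of a coupling. It will show that the square-root penalty in Lemma~\ref{lem:diagonal-state-reduction} does not decrease as such mass is moved.
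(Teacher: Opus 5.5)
Your proof is correct and is the paper's argument. The paper also differentiates directly and obtains $g_{A,B}'(\lambda)=-2+(A+B-2\lambda)/\sqrt{(A-\lambda)(B-\lambda)}$, which equals your $(s-t)^2/(st)$ once you expand $(s-t)^2=s^2+t^2-2st$, so your perfect-square form just spells out the paper's AM--GM step (and the continuity/mean-value extension to $\lambda=0$ is harmless but unnecessary, since $g_{A,B}$ is differentiable on all of $(-\infty,B)$).
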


\begin{proof}
    Direct differentiation gives
    \[
        g_{A,B}'(\lambda)
        =-2+\frac{A+B-2\lambda}
        {\sqrt{(A-\lambda)(B-\lambda)}}\geq0,
    \]
    where the last inequality is the arithmetic--geometric mean
    inequality.
\end{proof}

\begin{proposition}[Uniform weighted qutrit counterexample]
\label{prop:weighted-qutrit-counterexample}
    Under \eqref{eq:minimum-weight}, the three states in
    \eqref{eq:qutrit-triple} satisfy
    \begin{align}
        T_{C_{a,b,c}}(\rho,\sigma)
        &=\frac{(7-2\sqrt{10})a+b}{16},
        \label{eq:cost-rho-sigma}\\
        T_{C_{a,b,c}}(\rho,\tau)
        &=\frac{a+2b}{16},
        \label{eq:cost-rho-tau}\\
        T_{C_{a,b,c}}(\sigma,\tau)
        &=\frac{(3-2\sqrt2)b}{16}.
        \label{eq:cost-sigma-tau}
    \end{align}
    Both $T_{C_{a,b,c}}$ and $\sqrt{T_{C_{a,b,c}}}$ strictly violate the
    triangle inequality on this triple.
\end{proposition}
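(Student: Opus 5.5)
The plan is to apply Lemma~\ref{lem:diagonal-state-reduction} to each of the three pairs. Each reduction is a small classical minimization over $\Gamma_{\mathrm{cl}}$. For each pair I will exhibit an explicit coupling $\gamma$ attaining the claimed value, and then prove a matching lower bound. Two elementary facts drive the lower bounds. First, for $x-y=D>0$,
\[
    (\sqrt x-\sqrt y)^2=\frac{D^2}{(\sqrt x+\sqrt y)^2},
\]
which is increasing in $D$ and decreasing in $y$; this is the same monotonicity as Lemma~\ref{lem:square-root-monotonicity}. Second, $b=\min\{a,b,c\}$ by \eqref{eq:minimum-weight}.

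For $(\rho,\sigma)$ and $(\rho,\tau)$, the third row of $\gamma$ vanishes because $p_3=0$. The objective therefore reduces to
\[
    \tfrac12\bigl[a(\sqrt{\gamma_{12}}-\sqrt{\gamma_{21}})^2+b\gamma_{13}+c\gamma_{23}\bigr].
\]
Write $s:=\gamma_{13}$ and let $m$ be the third entry of the target vector, so $\gamma_{23}=m-s$. Since $c\geq b$, the last two terms are at least $b\,m$.

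The row-1 and column-1 constraints give $\gamma_{12}-\gamma_{21}=p_1-s-(\text{target})_1$. This difference is minimized at $s=m$. Moreover, $\gamma_{21}$ is at most the first target entry. The monotonicity fact then bounds the $a$-term below by its value at $s=m$ with $\gamma_{11}=0$. For $(\rho,\sigma)$ this gives $D=3/8$, $y=1/4$, $x=5/8$, hence $(\sqrt{5/8}-\tfrac12)^2=(7-2\sqrt{10})/8$. For $(\rho,\tau)$ it gives $D=3/8$, $y=1/8$, $x=1/2$, hence the value $1/8$. Both bounds are attained by the corresponding couplings with $\gamma_{11}=0$ and $\gamma_{13}=m$, which yields \eqref{eq:cost-rho-sigma} and \eqref{eq:cost-rho-tau}.

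For $(\sigma,\tau)$, the upper bound comes from the coupling with $\gamma_{13}=1/4$, $\gamma_{31}=1/8$, $\gamma_{22}=5/8$, and all other entries zero. Its cost is
\[
    \tfrac b2\bigl(\tfrac12-\tfrac1{2\sqrt2}\bigr)^2=\frac{(3-2\sqrt2)b}{16}.
\]
For the lower bound, I would use $c_{ij}\geq b$ to reduce to $b$ times the SWAP-type objective $\tfrac12\sum_{i<j}(\sqrt{\gamma_{ij}}-\sqrt{\gamma_{ji}})^2$. I would then minimize that objective by case analysis on which off-diagonal entries are active, again using the $D^2/(\sqrt x+\sqrt y)^2$ form. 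I expect this to be the main obstacle, because here $\gamma$ has genuinely more freedom. However, only the upper bound on $T(\sigma,\tau)$ is needed for the violation. Likewise, only upper bounds on $T(\rho,\sigma)$ and a lower bound on $T(\rho,\tau)$ are needed, and the argument above already supplies those.

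Finally, I would check the violations. For $T$, the inequality is
\[
    (a+2b) > (7-2\sqrt{10})a+b+(3-2\sqrt2)b,
\]
which is equivalent to $(2\sqrt{10}-6)a+(2\sqrt2-2)b>0$ and is therefore true. For the square roots, set $\alpha=7-2\sqrt{10}$, $\beta=3-2\sqrt2$ and $t=a/b\geq1$. Squaring, the claim is
\[
    (1-\alpha)t+(1-\beta)>2\sqrt{\beta(\alpha t+1)}.
\]
At $t=1$ this reads approximately $1.153>1.071$. The left side minus the right side has derivative $(1-\alpha)-\alpha\beta/\sqrt{\beta(\alpha t+1)}>0$ for $t\geq1$, so the inequality persists for all $t\geq1$. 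This gives both strict violations uniformly in the weights.
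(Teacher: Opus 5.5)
\textbf{Gap: the lower bound on $T_{C_{a,b,c}}(\sigma,\tau)$ is never proved.} Your treatment of $(\rho,\sigma)$ and $(\rho,\tau)$ is correct. It is essentially the paper's monotonicity argument, organized differently. You bound $b\gamma_{13}+c\gamma_{23}\geq bm$ on its own and handle the $a$-term by monotonicity of $D^2/(\sqrt{y+D}+\sqrt y)^2$ in $D$ and in $y$. The paper instead first sets $\mu=0$ and then $\lambda=0$ via Lemma~\ref{lem:square-root-monotonicity}. Your two violation checks are also correct. You are right that they need only upper bounds on $T(\rho,\sigma)$ and $T(\sigma,\tau)$ and a lower bound on $T(\rho,\tau)$. (At $t=1$ the right side is about $1.072$ rather than $1.071$, which is harmless.) However, the statement also asserts the equality \eqref{eq:cost-sigma-tau}. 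For it you give only the coupling attaining $(3-2\sqrt2)b/16$, and for the reverse inequality an unexecuted ``case analysis''. That case analysis would run over a coupling polytope with four free parameters, so that part of the proposition remains unproved.

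No case analysis is needed. After reducing to $\tfrac b2(x_{12}+x_{13}+x_{23})$ with $x_{ij}:=(\sqrt{\gamma_{ij}}-\sqrt{\gamma_{ji}})^2$, drop $x_{23}\geq0$. Then compare the square-root vectors of the first row and the first column of $\gamma$:
\[
u:=\bigl(\sqrt{\gamma_{11}},\sqrt{\gamma_{12}},\sqrt{\gamma_{13}}\bigr),\qquad v:=\bigl(\sqrt{\gamma_{11}},\sqrt{\gamma_{21}},\sqrt{\gamma_{31}}\bigr).
\]
The shared entry $\sqrt{\gamma_{11}}$ cancels, so $\|u-v\|_2^2=x_{12}+x_{13}$. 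The marginals fix the norms: $\|u\|_2^2=q_1=\tfrac14$ and $\|v\|_2^2=r_1=\tfrac18$. The reverse triangle inequality therefore gives
\[
x_{12}+x_{13}\geq\Bigl(\tfrac12-\tfrac1{2\sqrt2}\Bigr)^2=\tfrac{3-2\sqrt2}{8}.
\]
This matches your explicit coupling exactly and closes \eqref{eq:cost-sigma-tau}. The argument is the vector version of your scalar fact about $(\sqrt x-\sqrt y)^2$. The row/column mass mismatch $q_1-r_1$ must be paid for in square-root distance, however it is distributed over the off-diagonal entries.
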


\begin{proof}
    We apply Lemma~\ref{lem:diagonal-state-reduction} to each pair.

    \smallskip
    \noindent\emph{The pair $(\rho,\sigma)$.}
    Every classical coupling of $p$ and $q$ has the form
    \begin{equation}\label{eq:coupling-rho-sigma-parametrization}
        \gamma(\lambda,\mu)=
        \begin{pmatrix}
            \lambda&\frac58-\lambda+\mu&\frac18-\mu\\
            \frac14-\lambda&\lambda-\mu&\mu\\
            0&0&0
        \end{pmatrix},
    \end{equation}
    where
    \[
        0\leq\lambda\leq\frac14,
        \qquad
        0\leq\mu\leq\min\left\{\lambda,\frac18\right\}.
    \]
    The objective in \eqref{eq:diagonal-state-reduction} is
    \[
        \frac12\left[
        a\left(\sqrt{\frac58-\lambda+\mu}
          -\sqrt{\frac14-\lambda}\right)^2
        +b\left(\frac18-\mu\right)+c\mu
        \right].
    \]
    For fixed $\lambda$, the squared difference is nondecreasing in $\mu$,
    while $(c-b)\mu\geq0$ by \eqref{eq:minimum-weight}.  The minimum occurs
    at $\mu=0$.  Lemma~\ref{lem:square-root-monotonicity} then shows that
    the remaining expression is minimized at $\lambda=0$.  An optimal
    coupling is
    \begin{equation}\label{eq:optimal-coupling-rho-sigma}
        \gamma_{\rho\sigma}=
        \begin{pmatrix}
            0&\frac58&\frac18\\
            \frac14&0&0\\
            0&0&0
        \end{pmatrix}.
    \end{equation}
    Substitution gives \eqref{eq:cost-rho-sigma}, since
    \[
        \left(\sqrt{\frac58}-\sqrt{\frac14}\right)^2
        =\frac{7-2\sqrt{10}}8.
    \]

    \smallskip
    \noindent\emph{The pair $(\rho,\tau)$.}
    Every classical coupling of $p$ and $r$ has the form
    \begin{equation}\label{eq:coupling-rho-tau-parametrization}
        \gamma(\lambda,\mu)=
        \begin{pmatrix}
            \lambda&\frac12-\lambda+\mu&\frac14-\mu\\
            \frac18-\lambda&\frac18+\lambda-\mu&\mu\\
            0&0&0
        \end{pmatrix},
    \end{equation}
    where
    \[
        0\leq\lambda\leq\frac18,
        \qquad
        0\leq\mu\leq\frac18+\lambda.
    \]
    The objective is
    \[
        \frac12\left[
        a\left(\sqrt{\frac12-\lambda+\mu}
          -\sqrt{\frac18-\lambda}\right)^2
        +b\left(\frac14-\mu\right)+c\mu
        \right].
    \]
    The same argument first gives $\mu=0$ and then $\lambda=0$.  Hence
    \begin{equation}\label{eq:optimal-coupling-rho-tau}
        \gamma_{\rho\tau}=
        \begin{pmatrix}
            0&\frac12&\frac14\\
            \frac18&\frac18&0\\
            0&0&0
        \end{pmatrix}
    \end{equation}
    is optimal, and substitution gives \eqref{eq:cost-rho-tau}.

    \smallskip
    \noindent\emph{The pair $(\sigma,\tau)$.}
    For $\gamma\in\Gamma_{\mathrm{cl}}(q,r)$, set
    \[
        x_{ij}:=
        \left(\sqrt{\gamma_{ij}}-\sqrt{\gamma_{ji}}\right)^2.
    \]
    Since $b\leq a,c$,
    \begin{equation}\label{eq:sigma-tau-weight-lower-bound}
        \frac12(ax_{12}+bx_{13}+cx_{23})
        \geq\frac b2(x_{12}+x_{13}+x_{23}).
    \end{equation}
    To bound the last expression, define the vectors formed from the square
    roots of the first row and the first column of $\gamma$:
    \[
        u:=\left(\sqrt{\gamma_{11}},\sqrt{\gamma_{12}},
                  \sqrt{\gamma_{13}}\right),
        \qquad
        v:=\left(\sqrt{\gamma_{11}},\sqrt{\gamma_{21}},
                  \sqrt{\gamma_{31}}\right).
    \]
    By the definition of $x_{ij}$,
    \[
        \|u-v\|_2^2=x_{12}+x_{13}.
    \]
    Since the row sums of $\gamma$ are $q$ and its column sums are $r$,
    \[
        \|u\|_2^2=\sum_{j=1}^3\gamma_{1j}=q_1=\frac14,
        \qquad
        \|v\|_2^2=\sum_{j=1}^3\gamma_{j1}=r_1=\frac18.
    \]
    The reverse triangle inequality therefore gives
    \begin{align*}
        x_{12}+x_{13}
        &=\|u-v\|_2^2\\
        &\geq\left(\|u\|_2-\|v\|_2\right)^2\\
        &=\left(\sqrt{\frac14}-\sqrt{\frac18}\right)^2
          =\frac{3-2\sqrt2}{8}.
    \end{align*}
    Since $x_{23}\geq0$, \eqref{eq:sigma-tau-weight-lower-bound} now yields
    \[
        \frac12(ax_{12}+bx_{13}+cx_{23})
        \geq \frac{b(3-2\sqrt2)}{16}.
    \]
    This lower bound is attained by
    \begin{equation}\label{eq:optimal-coupling-sigma-tau}
        \gamma_{\sigma\tau}=
        \begin{pmatrix}
            0&0&\frac14\\
            0&\frac58&0\\
            \frac18&0&0
        \end{pmatrix}.
    \end{equation}
    This proves \eqref{eq:cost-sigma-tau}.

    It remains to verify the two strict violations.  From
    \eqref{eq:cost-rho-sigma}--\eqref{eq:cost-sigma-tau},
    \begin{align}
        &T_{C_{a,b,c}}(\rho,\tau)
        -T_{C_{a,b,c}}(\rho,\sigma)
        -T_{C_{a,b,c}}(\sigma,\tau)\notag\\
        &\hspace{2cm}
        =\frac{(2\sqrt{10}-6)a+(2\sqrt2-2)b}{16}>0.
        \label{eq:transport-cost-positive-gap}
    \end{align}
    For the square roots, the desired inequality is equivalent to
    \begin{equation}\label{eq:square-root-target}
        \sqrt{a+2b}
        >\sqrt{\kappa a+b}+(\sqrt2-1)\sqrt b,
        \qquad
        \kappa:=7-2\sqrt{10}.
    \end{equation}
    Normalize by $b>0$, set $s=a/b$, and define
    \[
        h(s):=\sqrt{s+2}-\sqrt{\kappa s+1}.
    \]
    We have $h(0)=\sqrt2-1$.  Moreover,
    $0<\kappa<1/\sqrt2$ and
    \[
        h'(s)=\frac1{2\sqrt{s+2}}
              -\frac{\kappa}{2\sqrt{\kappa s+1}}>0,
    \]
    because
    \[
        \kappa s+1-\kappa^2(s+2)
        =\kappa(1-\kappa)s+1-2\kappa^2>0.
    \]
    Thus $h(s)>h(0)$ for every $s>0$, which is
    \eqref{eq:square-root-target}.
\end{proof}

\begin{proof}[Proof of Theorem~\ref{thm:intro-standard-no-go}]
    By Theorem~\ref{thm:admissible-costs}, the cost $C$ satisfies
    \eqref{eq:admissible-cost}.  Choose any three-dimensional subspace
    $\calK\subseteq\mathbb{C}^d$.  Lemma~\ref{lem:common-support} shows that
    the compressed cost $C_{\calK}$ is admissible on $\calK$ and computes
    all costs between states supported there.

    Apply Lemma~\ref{lem:qutrit-normal-form} to $C_{\calK}$.  In the
    resulting basis it has the form $C_{a,b,c}$ with $0<b\leq a,c$.  Embed
    the three states in \eqref{eq:qutrit-triple} into $\mathbb{C}^d$ by
    extending them by zero on $\calK^\perp$.  They remain mutually
    commuting, and Proposition~\ref{prop:weighted-qutrit-counterexample}
    gives \eqref{eq:intro-T-violation} and
    \eqref{eq:intro-W-violation}.
\end{proof}

\begin{remark}[The qubit exception]\label{rem:qubit-exception}
    When $d=2$, the antisymmetric subspace is one-dimensional, so
    Theorem~\ref{thm:admissible-costs} forces every admissible cost to have
    the form $C=\lambda P_-$ with $\lambda>0$.  In this case
    $W_C=\sqrt\lambda\,W_{P_-}$, and $W_C$ is a metric on the Bloch
    ball~\cite{FECZ22,CEFZ23}.  The dimension threshold in
    Theorem~\ref{thm:intro-standard-no-go} is therefore sharp.
\end{remark}

\section{Stabilization does not restore the triangle inequality}
\label{sec:stabilization}

For the SWAP cost, write
\[
    T(\rho,\sigma):=T_{P_-}(\rho,\sigma),
    \qquad
    W(\rho,\sigma):=\sqrt{T(\rho,\sigma)}.
\]
M\"uller-Hermes introduced the stabilized quantities
\begin{equation}\label{eq:stabilized-cost}
    T_s(\rho,\sigma)
    :=\inf_{\gamma}T(\rho\otimes\gamma,\sigma\otimes\gamma),
    \qquad
    W_s(\rho,\sigma):=\sqrt{T_s(\rho,\sigma)},
\end{equation}
where the infimum ranges over finite-dimensional ancillary states
$\gamma$~\cite{Mul22}.  Stabilization improves the monotonicity properties
of the SWAP cost, but it leaves the commuting obstruction unchanged.

\begin{proposition}[Stabilization is trivial on commuting pairs]
\label{prop:stabilization-commuting-pairs}
    If $\rho$ and $\sigma$ commute, then for every ancillary state $\gamma$,
    \begin{equation}\label{eq:commuting-ancilla-invariance}
        T(\rho\otimes\gamma,\sigma\otimes\gamma)=T(\rho,\sigma).
    \end{equation}
    Consequently,
    \[
        T_s(\rho,\sigma)=T(\rho,\sigma),
        \qquad
        W_s(\rho,\sigma)=W(\rho,\sigma).
    \]
\end{proposition}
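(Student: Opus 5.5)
The plan is to prove \eqref{eq:commuting-ancilla-invariance} by matching upper and lower bounds. The key reduction is to diagonal marginals with cost $P_-$ in a product basis, where Lemma~\ref{lem:diagonal-state-reduction} applies directly.

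\emph{Reduction to diagonal form.} Since $\rho$ and $\sigma$ commute, choose an orthonormal basis of $\calH\cong\mathbb{C}^d$ diagonalizing both, so that $\rho=\diag p$ and $\sigma=\diag q$. Choose an orthonormal basis of the ancilla space $\mathbb{C}^m$ diagonalizing $\gamma=\diag r$. Then $\rho\otimes\gamma=\diag(p\otimes r)$ and $\sigma\otimes\gamma=\diag(q\otimes r)$ are diagonal in the product basis of $\mathbb{C}^{d}\otimes\mathbb{C}^m$. The SWAP cost is unitarily invariant, since $(V\otimes V)P_-(V\otimes V)^\dagger=P_-$, and in any orthonormal basis it equals $\sum_{k<l}\ketbra{k\wedge l}{k\wedge l}$. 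This is the cost $C_{\bm c}$ with all $c_{kl}=1$. Expanding $(\sqrt x-\sqrt y)^2$ and summing over pairs, Lemma~\ref{lem:diagonal-state-reduction} gives, for any diagonal pair,
\[
T(\diag p,\diag q)=\tfrac12\min_{\pi\in\Gamma_{\mathrm{cl}}(p,q)}\bigl(1-\Phi(\pi)\bigr),\qquad \Phi(\pi)=\sum_{i,j}\sqrt{\pi_{ij}\pi_{ji}}.
\]
Here the diagonal terms $\sqrt{\pi_{ii}\pi_{ii}}=\pi_{ii}$ are accounted for by the total mass $1$. It therefore suffices to show that $\max\Phi$ over $\Gamma_{\mathrm{cl}}(p\otimes r,q\otimes r)$ equals $\max\Phi$ over $\Gamma_{\mathrm{cl}}(p,q)$.

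\emph{Achievability.} Given an optimal $\pi\in\Gamma_{\mathrm{cl}}(p,q)$, set $\Pi_{(i,a),(j,b)}=\pi_{ij}\,r_a\,\delta_{ab}$. Its marginals are $p\otimes r$ and $q\otimes r$. Moreover $\sqrt{\Pi_{(i,a),(j,b)}\Pi_{(j,b),(i,a)}}=\sqrt{\pi_{ij}\pi_{ji}}\,r_a\delta_{ab}$, so summing gives $\Phi(\Pi)=\Phi(\pi)$. Hence the tensored cost is at most $T(\rho,\sigma)$.

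\emph{Converse (the main step).} Given any $\Pi\in\Gamma_{\mathrm{cl}}(p\otimes r,q\otimes r)$, coarse-grain it to $\pi_{ij}:=\sum_{a,b}\Pi_{(i,a),(j,b)}$, which lies in $\Gamma_{\mathrm{cl}}(p,q)$. For fixed $(i,j)$, Cauchy--Schwarz gives
\[
\sum_{a,b}\sqrt{\Pi_{(i,a),(j,b)}\Pi_{(j,b),(i,a)}}\le\Bigl(\sum_{a,b}\Pi_{(i,a),(j,b)}\Bigr)^{1/2}\Bigl(\sum_{a,b}\Pi_{(j,b),(i,a)}\Bigr)^{1/2}=\sqrt{\pi_{ij}\pi_{ji}}.
\]
Summing over $(i,j)$ yields $\Phi(\Pi)\le\Phi(\pi)\le\max\Phi$. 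The care needed here is bookkeeping: the pairing $(i,a)\leftrightarrow(j,b)$ must be handled correctly, including the terms with $i=j$. Note that the ancilla marginal constraint is not even needed for this direction.

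Combining the two bounds gives \eqref{eq:commuting-ancilla-invariance} for every $\gamma$. Taking the infimum over $\gamma$ gives $T_s=T$, and taking square roots gives $W_s=W$.
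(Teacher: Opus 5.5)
Your proposal is correct and follows essentially the same route as the paper. Both arguments diagonalize the marginals, rewrite the unweighted diagonal-state reduction as $T=\frac12\bigl(1-\max\Phi\bigr)$, and use the lift $\Pi_{(i,a),(j,b)}=\pi_{ij}r_a\delta_{ab}$ for achievability, with coarse-graining plus Cauchy--Schwarz for the converse. Your explicit checks (the diagonal terms absorbed into the total mass, and the $i=j$ bookkeeping) fill in details the paper leaves implicit.
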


\begin{proof}
    The SWAP cost is invariant under applying the same unitary to both
    marginals.  We may therefore diagonalize $\rho$ and $\sigma$
    simultaneously and diagonalize $\gamma$ independently.  Write the
    resulting states as $\diag p$, $\diag q$, and $\diag r$.

    For a classical coupling $\pi$, define
    \begin{equation}\label{eq:phi-functional}
        \Phi(\pi):=\sum_{i,j}\sqrt{\pi_{ij}\pi_{ji}}.
    \end{equation}
    The unweighted case of Lemma~\ref{lem:diagonal-state-reduction} is
    equivalent to
    \begin{equation}\label{eq:swap-cost-phi-formula}
        T(\diag p,\diag q)
        =\frac12\left(
        1-\max_{\pi\in\Gamma_{\mathrm{cl}}(p,q)}\Phi(\pi)
        \right).
    \end{equation}

    Let $\Pi$ be a coupling of $p\otimes r$ and $q\otimes r$, with indices
    $(i,a)$ and $(j,b)$, and coarse-grain the ancillary indices by
    \[
        \overline\pi_{ij}
        :=\sum_{a,b}\Pi_{(i,a),(j,b)}.
    \]
    Then $\overline\pi\in\Gamma_{\mathrm{cl}}(p,q)$, and the
    Cauchy--Schwarz inequality gives
    \begin{align*}
        \Phi(\Pi)
        &=\sum_{i,j,a,b}
          \sqrt{\Pi_{(i,a),(j,b)}\Pi_{(j,b),(i,a)}}\\
        &\leq\sum_{i,j}
          \sqrt{\overline\pi_{ij}\overline\pi_{ji}}
          =\Phi(\overline\pi).
    \end{align*}
    Conversely, any $\pi\in\Gamma_{\mathrm{cl}}(p,q)$ lifts to
    \[
        \Pi_{(i,a),(j,b)}
        :=\pi_{ij}r_a\,\mathbf{1}_{\{a=b\}},
    \]
    which is a coupling of $p\otimes r$ and $q\otimes r$ and satisfies
    $\Phi(\Pi)=\Phi(\pi)$.  The maxima in
    \eqref{eq:swap-cost-phi-formula} are therefore equal before and after
    tensoring with $r$, proving \eqref{eq:commuting-ancilla-invariance}.
\end{proof}

\begin{proof}[Proof of Corollary~\ref{cor:intro-stabilized-no-go}]
    The three states in \eqref{eq:qutrit-triple} commute, so
    Proposition~\ref{prop:stabilization-commuting-pairs} applies.  Setting
    $a=b=c=1$ in
    \eqref{eq:cost-rho-sigma}--\eqref{eq:cost-sigma-tau} gives
    \begin{align*}
        T_s(\rho,\sigma)&=\frac{4-\sqrt{10}}8,\\
        T_s(\rho,\tau)&=\frac3{16},\\
        T_s(\sigma,\tau)&=\frac{3-2\sqrt2}{16}.
    \end{align*}
    The square-root violation in
    Proposition~\ref{prop:weighted-qutrit-counterexample} now proves the
    claim.  This transfers Miller's SWAP-cost counterexample~\cite{Mil26}
    to the stabilized cost.
\end{proof}

\section{Fixed-cost no-go for channel-induced couplings}
\label{sec:channel-induced-no-go}

The channel-induced formulation fails at an earlier metric axiom.  The
first step identifies global nonnegativity of the optimized cost with
positivity of the cost operator.

\begin{proposition}[Global nonnegativity]\label{prop:channel-global-positivity}
    The inequality
    \[
        \widetilde T_C(\rho,\sigma)\geq0
        \quad\text{for all states }\rho,\sigma
    \]
    holds if and only if $C\succeq0$.
\end{proposition}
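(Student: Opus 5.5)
We prove the two implications separately; the nontrivial direction is ``nonnegativity implies $C\succeq0$'', and it rests on the observation that every bipartite state is a channel-induced coupling of its own (transposed) marginals.

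\emph{Sufficiency.} Suppose $C\succeq0$. Every $\omega\in\widetilde\Gamma(\rho,\sigma)$ is positive semidefinite, so $\operatorname{Tr}(C\omega)=\operatorname{Tr}(\omega^{1/2}C\omega^{1/2})\geq0$. The feasible set is nonempty, since it contains $\rho^\top\otimes\sigma$, and it is compact, so the minimum defining $\widetilde T_C(\rho,\sigma)$ is attained and is nonnegative.

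\emph{Necessity.} Assume $\widetilde T_C(\rho,\sigma)\geq0$ for all states $\rho,\sigma$. Fix an arbitrary bipartite density operator $\omega$ on $\calH\otimes\calH$, and set
\[
    \rho:=(\operatorname{Tr}_B\omega)^\top,
    \qquad
    \sigma:=\operatorname{Tr}_A\omega .
\]
The transpose, taken in the fixed basis, preserves positivity and trace, so $\rho$ is a density operator, and so is $\sigma$. Since $(\rho^\top)^\top=\operatorname{Tr}_B\omega$, the definition \eqref{eq:intro-channel-couplings} gives $\omega\in\widetilde\Gamma(\rho,\sigma)$. Therefore
\[
    \operatorname{Tr}(C\omega)\geq\widetilde T_C(\rho,\sigma)\geq0 .
\]
Now take $\omega=\ketbra{x}{x}$ for an arbitrary unit vector $x\in\calH\otimes\calH$. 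This gives $\langle x,Cx\rangle\geq0$ for every unit $x$, and by scaling for every $x$. Hence $C\succeq0$.

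The only point that needs care is the transpose convention. Its role is exactly to make the map $\omega\mapsto\bigl((\operatorname{Tr}_B\omega)^\top,\operatorname{Tr}_A\omega\bigr)$ land in state space with $\omega$ feasible for the resulting pair. With that checked, the argument is elementary.
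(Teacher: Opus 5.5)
Your proof is correct and follows the paper's argument: sufficiency because every feasible expectation of a positive semidefinite $C$ is nonnegative, and necessity because every bipartite state $\omega$ lies in $\widetilde\Gamma\bigl((\operatorname{Tr}_B\omega)^\top,\operatorname{Tr}_A\omega\bigr)$, which forces $\operatorname{Tr}(C\omega)\geq0$ for all states. Your additional remarks, that the feasible set is nonempty (it contains $\rho^\top\otimes\sigma$) and compact so the minimum is attained, and your explicit specialization to pure $\omega$, only make steps explicit that the paper leaves implicit.
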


\begin{proof}
    If $C\succeq0$, every feasible expectation is nonnegative.  Conversely,
    let $\omega$ be an arbitrary bipartite state and set
    \[
        \rho=(\operatorname{Tr}_B\omega)^{\top},
        \qquad
        \sigma=\operatorname{Tr}_A\omega.
    \]
    Then $\omega\in\widetilde\Gamma(\rho,\sigma)$.  Global nonnegativity of
    the minimum implies $\operatorname{Tr}(C\omega)\geq0$.  Since this holds
    for every bipartite state, $C\succeq0$.
\end{proof}

The next linear-algebra fact will turn zero self-cost on pure states into a
kernel condition on the whole bipartite space.  Here
$\ket{\overline\psi}$ denotes the entrywise conjugate of $\ket\psi$ in the
fixed basis as introduced in Section~\ref{sec:preliminaries}.

\begin{lemma}[Spanning by conjugate-diagonal tensors]
\label{lem:conjugate-diagonal-spanning}
    \begin{equation}\label{eq:conjugate-diagonal-spanning}
        \operatorname{span}_{\mathbb{C}}
        \left\{\ket{\overline\psi}\otimes\ket\psi:
        \psi\in\calH\right\}
        =\calH\otimes\calH.
    \end{equation}
\end{lemma}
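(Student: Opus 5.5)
We will prove \eqref{eq:conjugate-diagonal-spanning} by duality: it suffices to show that the only vector orthogonal to every $\ket{\overline\psi}\otimes\ket\psi$ is zero. Concretely, write an arbitrary vector $\ket{X}=\sum_{i,j}X_{ij}\ket{i}\otimes\ket{j}$ in the fixed basis and associate to it the $d\times d$ matrix $X=(X_{ij})$. For $\ket\psi=\sum_k\psi_k\ket k$ we have $\ket{\overline\psi}\otimes\ket\psi=\sum_{i,j}\overline{\psi_i}\psi_j\ket{ij}$, so
\[
    \bigl\langle X,\ \overline\psi\otimes\psi\bigr\rangle
    =\sum_{i,j}\overline{X_{ij}}\,\overline{\psi_i}\,\psi_j
    =\overline{\sum_{i,j}X_{ij}\psi_i\overline{\psi_j}}
    =\overline{\bra{\overline\psi}\,X\,\ket{\overline\psi}}
\]
after a harmless relabeling; in other words, orthogonality to all the spanning vectors says precisely that the quadratic form $\phi\mapsto\bra{\phi}X^{\top}\ket{\phi}$ (or $\bra\phi X\ket\phi$, depending on the index bookkeeping) vanishes for every $\phi\in\calH$, since $\overline\psi$ ranges over all of $\calH$ as $\psi$ does.

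The key step is then the standard polarization fact that a complex matrix $M$ with $\bra\phi M\ket\phi=0$ for all $\phi\in\mathbb{C}^d$ must be zero. I will prove it directly: taking $\phi=\ket k$ gives $M_{kk}=0$; taking $\phi=\ket k+\ket l$ gives $M_{kl}+M_{lk}=0$; taking $\phi=\ket k+\mathrm{i}\ket l$ gives $\mathrm{i}(M_{kl}-M_{lk})=0$. Together these force $M_{kl}=0$ for all $k,l$. Hence $X=0$, so the orthogonal complement of the span is trivial and the span is all of $\calH\otimes\calH$.

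The only point requiring care, and the one I would double-check, is the conjugation bookkeeping in the first step: it is essential that the two tensor factors carry $\overline\psi$ and $\psi$ (not $\psi$ and $\psi$), because then the pairing is a sesquilinear rather than bilinear form in $\psi$, and the polarization argument over $\mathbb{C}$ (using the choice $\ket k+\mathrm{i}\ket l$) kills antisymmetric parts as well. By contrast, the vectors $\psi\otimes\psi$ only span $\operatorname{Sym}^2(\calH)$, which is exactly why the standard coupling case in Section~\ref{sec:standard-no-go} behaves differently. No earlier result of the paper is needed beyond the fixed-basis convention from Section~\ref{sec:preliminaries}.
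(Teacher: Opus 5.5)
Your proof is correct and uses the same complex-polarization idea as the paper, only in dual form. The paper writes each basis tensor explicitly as $\ket{\overline x}\otimes\ket y=\frac14\sum_{k=0}^{3}(-\mathrm{i})^k\,\ket{\overline{z_k}}\otimes\ket{z_k}$ with $z_k:=x+\mathrm{i}^k y$. You instead show that the orthogonal complement of the family is zero by reducing to the vanishing quadratic form $\bra\phi M\ket\phi$. The paper's route gives explicit expansion coefficients. Yours shows slightly more: the $d^2$ vectors $\ket{\overline\psi}\otimes\ket\psi$ with $\psi$ among $\ket k$, $\ket k+\ket l$, $\ket k+\mathrm{i}\ket l$ ($k<l$) already form a basis.
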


\begin{proof}
    Identify the first factor with the conjugate Hilbert space, and write
    \[
        Q(z):=\ket{\overline z}\otimes\ket z.
    \]
    For arbitrary $x,y\in\calH$, conjugation is conjugate-linear, so
    \[
        \ket{\overline{x+\mathrm{i}^k y}}
        =\ket{\overline x}+(-\mathrm{i})^k\ket{\overline y}.
    \]
    Consequently,
    \begin{align*}
        Q(x+\mathrm{i}^k y)
        &={}
        \ket{\overline x}\otimes\ket x
        +\mathrm{i}^k\ket{\overline x}\otimes\ket y\\
        &\quad
        +(-\mathrm{i})^k\ket{\overline y}\otimes\ket x
        +\ket{\overline y}\otimes\ket y.
    \end{align*}
    Using
    \[
        \sum_{k=0}^3(-\mathrm{i})^k=0,
        \qquad
        \sum_{k=0}^3(-\mathrm{i})^k\mathrm{i}^k=4,
        \qquad
        \sum_{k=0}^3(-1)^k=0,
    \]
    we obtain the polarization identity
    \begin{equation}\label{eq:conjugate-tensor-polarization}
        \ket{\overline x}\otimes\ket y
        =\frac14\sum_{k=0}^3(-\mathrm{i})^k
        Q(x+\mathrm{i}^k y).
    \end{equation}
    Now let $\{e_i\}_{i=1}^d$ be a basis of $\calH$.  The tensors
    $\ket{\overline{e_i}}\otimes\ket{e_j}$ form a basis of
    $\calH\otimes\calH$, and \eqref{eq:conjugate-tensor-polarization}
    expresses each of them as a linear combination of vectors of the form
    $Q(\psi)$.  Hence the latter vectors span $\calH\otimes\calH$.
\end{proof}

\begin{proposition}[Zero self-cost]\label{prop:channel-zero-self-cost}
    Suppose $C\succeq0$ and
    $\widetilde T_C(\rho,\rho)=0$ for every state $\rho$.  Then $C=0$.
\end{proposition}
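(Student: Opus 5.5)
The plan is to use pure states: for a pure state the channel-induced self-coupling set collapses to a single product state. Combining this with positivity of $C$ gives a kernel condition, and Lemma~\ref{lem:conjugate-diagonal-spanning} then upgrades that condition to all of $\calH\otimes\calH$.

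First, fix a unit vector $\psi$ and set $\rho=\ketbra{\psi}{\psi}$. In the fixed basis, $\rho^\top=\ketbra{\overline\psi}{\overline\psi}$. Any $\omega\in\widetilde\Gamma(\rho,\rho)$ has both marginals pure, namely $\operatorname{Tr}_B\omega=\ketbra{\overline\psi}{\overline\psi}$ and $\operatorname{Tr}_A\omega=\ketbra{\psi}{\psi}$. I would show this forces
\[
\omega=\ketbra{\overline\psi\otimes\psi}{\overline\psi\otimes\psi}.
\]
The argument follows the proof of Lemma~\ref{lem:common-support}. Take $Q=\Id-\ketbra{\overline\psi}{\overline\psi}$. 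Then $\operatorname{Tr}((Q\otimes\Id)\omega)=0$, and positivity of $\omega$ gives $(Q\otimes\Id)\omega=0=\omega(Q\otimes\Id)$. The same reasoning applies on the second factor. Hence $\omega$ is supported on the one-dimensional space spanned by $\ket{\overline\psi}\otimes\ket{\psi}$. Since $\operatorname{Tr}\omega=1$, $\omega$ equals the stated projector. Conversely, this product state is feasible, so $\widetilde\Gamma(\rho,\rho)$ is exactly this singleton.

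Next, the hypothesis $\widetilde T_C(\rho,\rho)=0$ gives
\[
\bra{\overline\psi\otimes\psi}C\ket{\overline\psi\otimes\psi}=0.
\]
Since $C\succeq0$, we can write this as $\|C^{1/2}\ket{\overline\psi\otimes\psi}\|^2=0$. It follows that $C\ket{\overline\psi\otimes\psi}=0$. By homogeneity this holds for every $\psi\in\calH$, not only unit vectors.

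Finally, $\ker C$ is a linear subspace that contains every vector $\ket{\overline\psi}\otimes\ket\psi$. By Lemma~\ref{lem:conjugate-diagonal-spanning}, these vectors span $\calH\otimes\calH$. Therefore $\ker C=\calH\otimes\calH$, so $C=0$.

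The only step needing real care is the uniqueness of the self-coupling for pure states. Everything else is routine. Uniqueness is handled by the support argument already used in Lemma~\ref{lem:common-support}, applied with rank-one projections.
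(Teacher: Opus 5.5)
Your proposal is correct and follows the paper's proof step for step. For pure $\rho$ the self-coupling is unique and equals $\ketbra{\overline\psi\otimes\psi}{\overline\psi\otimes\psi}$; then $C\succeq0$ turns zero expectation into $C\ket{\overline\psi\otimes\psi}=0$, and Lemma~\ref{lem:conjugate-diagonal-spanning} finishes the argument. The only difference is that you prove uniqueness with the support argument of Lemma~\ref{lem:common-support}, whereas the paper just cites the fact that a bipartite state with a pure marginal is a product.
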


\begin{proof}
    Take $\rho=\ketbra{\psi}{\psi}$ pure.  A bipartite state with a pure
    marginal is a product with that marginal.  Hence the only element of
    $\widetilde\Gamma(\rho,\rho)$ is
    \[
        \ketbra{\overline\psi\otimes\psi}
                {\overline\psi\otimes\psi}.
    \]
    Zero self-cost gives
    \[
        \bra{\overline\psi\otimes\psi}
        C
        \ket{\overline\psi\otimes\psi}=0.
    \]
    Positivity of $C$ implies
    $C\ket{\overline\psi\otimes\psi}=0$ for every $\psi$.  These vectors
    span $\calH\otimes\calH$ by
    Lemma~\ref{lem:conjugate-diagonal-spanning}, so $C=0$.
\end{proof}

\begin{proof}[Proof of Theorem~\ref{thm:intro-channel-no-go}]
    Proposition~\ref{prop:channel-global-positivity} gives $C\succeq0$.
    Proposition~\ref{prop:channel-zero-self-cost} then gives $C=0$.  For
    $d\geq2$, the resulting optimized cost vanishes for every pair of states
    and therefore cannot separate points.
\end{proof}

\noindent\emph{Comparison.}
The two formulations fail for different reasons.  In the standard
formulation, every admissible cost is positive definite on the antisymmetric
subspace, yet violates the triangle inequality when $d\geq3$.  In the
channel-induced formulation, global nonnegativity and zero self-cost force
$C=0$, already ruling out point separation when $d\geq2$.

\section*{Acknowledgements}
The authors used Large Language Models as AI-assisted research and writing tools throughout the
preparation of this manuscript. The tool was used to help brainstorm ideas and explore proof
strategies. Portions of the manuscript text were redrafted or modified with AI assistance across all
sections. All final mathematical claims, algorithms, proofs, citations, and wording were reviewed,
edited, and validated by the authors. The authors assume responsibility for all content of the
paper.

Minbo Gao would like to thank Li Zhou for helpful discussions. Tianshi Yu would like to thank Lihong Zhi for bringing the problem studied in this paper to his attention and for helpful discussions.

{\small
\bibliographystyle{alphaurl}
\bibliography{main}
}

\end{document}